\newcommand{\Phib}{\mathbf{\Phi}}
\newcommand{\psib}{\bm{\psi}}
\newcommand{\phib}{\bm{\varphi}}
\newcommand{\Psib}{\mathbf{\Psi}}
\newcommand{\Ib}{\mathbf{I}}
\newcommand{\Xib}{\mathbf{\Xi}}
\newcommand{\Ab}{\mathbf{A}}
\newcommand{\Bb}{\mathbf{B}}
\newcommand{\Lb}{\mathbf{L}}
\newcommand{\Db}{\mathbf{D}}
\newcommand{\Abb}{\mathbf{T}_{\xb}}
\newcommand{\Gb}{\mathbf{G}}
\newcommand{\Ub}{\mathbf{U}}
\newcommand{\ub}{\mathbf{u}}
\newcommand{\yb}{\mathbf{y}}
\newcommand{\xb}{\mathbf{x}}
\newcommand{\nb}{\mathbf{n}}
\newcommand{\V}{\mathcal{V}}
\newcommand{\W}{\mathcal{W}}
\newcommand{\E}{\mathcal{E}}
\newcommand{\G}{\mathcal{G}}
\newcommand{\C}{\mathcal{C}}
\newcommand{\D}{\mathcal{D}}
\newcommand{\Nc}{\mathcal{N}}
\newcommand{\Ac}{\mathcal{T}}
\newcommand{\Ncb}{\bar{\mathcal{N}}}
\newcommand{\R}{\mathcal{R}}
\newcommand{\Sx}{\mathcal{S}}
\newcommand{\Acc}{\mathcal{\Ac_{\xb}}}
\newcommand{\I}{\mathcal{I}}
\newcommand{\RR}{\mathbb{R}}
\newcommand{\Ed}{\mathds{E}}
\newcommand{\CC}{\mathbb{C}}
\newcommand{\PP}{\mathbb{P}}
\newcommand{\gmin}{g_{\mathrm{\tiny min}}}
\newcommand{\e}{\bm{\varepsilon}}
\newcommand{\gammas}{\gamma_2(\Ac,\lVert \cdot\rVert_2)}
\newcommand{\gammasb}{\gamma_2(\Acc,\lVert \cdot\rVert_2)}
\newcommand{\xc}{\hat{\mathbf{x}}}
\newcommand{\pii}{\pi(i)}
\newcommand{\pij}{\pi(j)}
\newcommand{\vertiii}[1]{{\left\vert\kern-0.25ex\left\vert\kern-0.25ex\left\vert #1 
    \right\vert\kern-0.25ex\right\vert\kern-0.25ex\right\vert}}
\newtheorem{lemma}{Lemma}
\newtheorem{definition}{Definition}
\newtheorem{theorem}{Theorem}
\begin{document}
%
% paper title
% Titles are generally capitalized except for words such as a, an, and, as,
% at, but, by, for, in, nor, of, on, or, the, to and up, which are usually
% not capitalized unless they are the first or last word of the title.
% Linebreaks \\ can be used within to get better formatting as desired.
% Do not put math or special symbols in the title.
\title{Sampling of graph signals via randomized local aggregations}
%
%
% author names and IEEE memberships
% note positions of commas and nonbreaking spaces ( ~ ) LaTeX will not break
% a structure at a ~ so this keeps an author's name from being broken across
% two lines.
% use \thanks{} to gain access to the first footnote area
% a separate \thanks must be used for each paragraph as LaTeX2e's \thanks
% was not built to handle multiple paragraphs
%

\author{Diego~Valsesia,~\IEEEmembership{Member,~IEEE,}
        Giulia~Fracastoro,~\IEEEmembership{Member,~IEEE,}
        and~Enrico~Magli,~\IEEEmembership{Fellow,~IEEE}
\thanks{The authors are with Politecnico di Torino -- Department of Electronics and Telecommunications, Italy. Email: \{name.surname\}@polito.it.}% <-this % stops a space

}

\maketitle

% As a general rule, do not put math, special symbols or citations
% in the abstract or keywords.
\begin{abstract}
Sampling of signals defined over the nodes of a graph is one of the crucial problems in graph signal processing. While in classical signal processing sampling is a well defined operation, when we consider a graph signal many new challenges arise and defining an efficient sampling strategy is not straightforward. Recently, several works have addressed this problem. The most common techniques select a subset of nodes to reconstruct the entire signal. However, such methods often require the knowledge of the signal support and the computation of the sparsity basis before sampling. Instead, in this paper we propose a new approach to this issue. We introduce a novel technique that combines localized sampling with compressed sensing. We first choose a subset of nodes and then, for each node of the subset, we compute random linear combinations of signal coefficients localized at the node itself and its neighborhood. The proposed method provides theoretical guarantees in terms of reconstruction and stability to noise for any graph and any orthonormal basis, even when the support is not known. Code is available at \verb!https://git.io/fj0Ib! 
\end{abstract}

% Note that keywords are not normally used for peerreview papers.
\begin{IEEEkeywords}
Graph signal processing, Sampling, Random projections, Compressed sensing
\end{IEEEkeywords}

\section{Introduction}
\label{sec:intro}

The analysis of signals defined over graphs is a natural extension of classical processing of time-varying signals that has emerged due to the wide array of applications where the signals of interest lie on irregular domains \cite{shuman2013emerging}. Fields as varied as sensor and social networks \cite{sensor_networks,Lazer721}, genetics \cite{Davidson1669}, neuroscience \cite{huang2016graph}, computer graphics \cite{thanou2016graph} and image processing \cite{fracastoro2015predictive,fracastoro2016graph,hu2015multiresolution} may benefit from graph signal processing (GSP) techniques. The pervasiveness of big data \cite{Moura_bigdata} makes sampling and subsequent recovery crucial to creating efficient representations. While sampling is well understood in classical signal processing for time-varying signals, not least thanks to the recent developments in compressed sensing (CS) \cite{Donoho_CS}, signals whose coefficients lie on the nodes of a graph pose a unique set of challenges. Most current efforts in the field have been focused on establishing a concept of a frequency domain for graph signals as a linear transform dependent on the graph structure \cite{hammond2011wavelets, Leuss_Dual}, and uncovering the fundamental limits of sampling for signals having a low-dimensional structure in this transformed domain \cite{Tsitsvero_uncertainty}. Efficient sampling of graph signals is concerned with a notion of locality of the sampling procedure where only a few nodes can be directly observed and such nodes can only return the value of the coefficient on the node itself \cite{Puy_random,Chen_sampling,anis2016efficient,romero2017kernel,chamon2017greedy} or an aggregate value determined from the coefficients on its neighbors \cite{Marques_successive,wang2016local}. Universal sampling strategies such as CS with dense random projections would be optimal for graph signals as well as classical signals because they are independent of the particular transform enforcing the low-dimensional prior \cite{Candes_nearoptimal}. However, dense random projections violate locality as each projection requires to combine all the coefficients across the graph. This paper shows that it is possible to combine locality with CS to sample graph signals with a low-complexity scheme that provides theoretical guarantees for perfect reconstruction and stability to noise. In particular, the proposed method of randomized local aggregations fits the line of works on aggregation sampling and consists of computing random linear combinations of the signal coefficients restricted to the neighborhood of a subset of nodes in the graph. The theoretical analysis rigorously derives conditions on the number of measurements required to guarantee perfect reconstruction and stability to noise depending on the structure of the graph and of the signal. Finally, the proposed sampling method does not need a priori knowledge of the sparsity support of the signal and can be therefore used even if it is unknown at sampling time. This is common to many problems, as discussed in the literature on CS \cite{Donoho_CS}, as well as on GSP \cite{Marques_successive,Varma_spectrum}.

This paper is organized as follows. Section \ref{sec:bkg} presents some background on GSP and CS, and discusses relevant works in the literature. Section \ref{sec:method} introduces the proposed method performing randomized local aggregations. Section \ref{sec:theory} analyzes the method and provides theoretical guarantees. Section \ref{sec:experiments} reports numerical experiments comparing the proposed method with relevant methods in the literature. Finally, Section \ref{sec:conclusions} draws some conclusions.

\subsection*{Notation}

Lower-case (upper-case) bold symbols denote real-valued vectors (matrices). Lower-case letters indicate scalars. Upper-case letters denote random variables. Symbols $\PP$ and $\Ed$ denote the probability and expectation operators, respectively. Calligraphic letters denote sets. Matrix $\Ab_\R$ indicates the restriction of $\Ab$ to rows indexed by set $\R$, while $\Ab_{\vert\C}$ indicates restriction of $\Ab$ to columns indexed by set $\C$.

\section{Background}
\label{sec:bkg}

\subsection{Graph signal processing}
\label{sec:bkg_graph}
We first introduce a few definitions related to graph signal processing that will be used in the remainder of the paper.
Let $\G(\V,\E)$ denote a directed graph where $\V$ is the set of nodes with $|\V|=n$ and $\E$ is the set of edges, being $(i,j)$ an edge from node $i$ to node $j$. %We first introduce a few definitions that will be used in the remainder of the paper.
\begin{definition}
Given $\G$, for each $i\in\V$ we define the in-neighborhood of the node $i$ as $$\Nc_i=\{j\in\V : (j,i)\in\E\}.$$ 
\end{definition}
Throughout the paper, whenever the terms neighborhood or neighbor are used, they are meant to be an in-neighborhood or an in-neighbor, respectively. For an undirected graph, the in- and out- neighborhoods coincide.
\begin{definition}
Given $\G$, a dominating set of the graph $\G$ is a subset $\D\subseteq\V$ with the following property
\[
\forall i\in\V\ \exists\,j\in\D\ \mathrm{ s. t. }\ i\in\Ncb_j,
\]
where $\Ncb_j=\Nc_j\cup\{j\}$.
\end{definition}
Given $\G$, we also introduce the adjacency matrix $\Ab\in\{0,1\}^{n\times n}$ where $a_{ij}=1$ if $(i,j) \in \E$ and $a_{ij}=0$ otherwise. We define $\bar{\Ab}$ as the adjacency matrix of the same graph $\G$ including self-loops, i.e. $\bar{a}_{ii}=1$ $\forall{i}\in\V$. %In this paper, we consider undirected graphs, thus $\Ab$ is symmetric. 
For undirected graphs, the graph Laplacian is defined as $\Lb=\Db-\Ab$ where $\Db$ is a diagonal matrix whose $i$-th diagonal element $\Db_{ii}$ is the sum of all the edges incident to node $i$. %Since $\Lb$ is a real symmetric matrix, 
$\Lb$ is diagonalizable by an orthogonal matrix $$\Lb=\Ub\mathbf{\Lambda}\Ub^H,$$ where $\Ub\in\CC^{n\times n}$ is the eigenvector matrix of $\Lb$ that contains the eigenvectors as columns and $\mathbf{\Lambda}\in\RR^{n\times n}$ is the diagonal eigenvalue matrix where the eigenvalues are sorted in increasing order. 
A graph signal $\xb\in\RR^{n}$ in the node domain is a real-valued function defined on the nodes of the graph $\G$, such that $x_i$ with $i=1,...,n$ is the value of the signal at node $i\in\V$ \cite{shuman2013emerging}. For an undirected graph, the eigenvectors of the Laplacian are used to define the graph Fourier transform (GFT) \cite{shuman2013emerging} of the signal $\xb$ as follows 
\[
\hat{\xb}=\Ub^H\xb.
\]
If we consider a directed graph, we can define the GFT matrix using the eigenvector matrix of the adjacency matrix $\Ab$, in this case we have $\Ub\in\CC^{n\times n}$ \cite{sandryhaila2014discrete}.

\subsection{Compressed sensing}
\label{sec:bkg_cs}

Compressed sensing (CS) \cite{Donoho_CS, Candes_nearoptimal} is an established theory for signal sampling relying on linear sampling operators satisfying certain norm-preservation properties and employing nonlinear reconstruction techniques that leverage signal models such as sparsity. The typical setup involves a signal $\xb \in \RR^n$, having a sparse representation under an orthonormal basis $\Ub$, $\xb = \Ub \hat{\xb}$, with $\Vert \hat{\xb} \Vert_0 \leq k$, and a vector of measurements $\yb \in \RR^m$, $m \ll n$, computed as 
\begin{align*}
\yb = \Phib \xb = \Phib \Ub \hat{\xb} = \Psib \hat{\xb}.
\end{align*}
The most well-known reconstruction algorithm to recover $\hat{\xb}$ from $\yb$ is $\ell_1$ norm minimization:
\begin{align}
\label{eq:l1min}
\hat{\xb}^* = \arg\min_{\hat{\xb}} \Vert \hat{\xb} \Vert_1 \quad \text{s.t.} \quad \Psib \hat{\xb} = \yb,
\end{align}
with $\Psib = \Phib \Ub$.

It is known that a sufficient condition for perfect reconstruction and stability in presence of additive noise is that matrix $\Psib$, the product between the sampling matrix and the sparsity basis, provides a stable, distance-preserving embedding of the set of sparse signals $\Sigma_k$. This notion is captured by the Restricted Isometry Property (RIP) \cite{Candes_decoding, Baraniuk2008, Mendelson2008}.
\begin{definition}[RIP] A matrix $\Psib$ satisfies the Restricted Isometry Property of order $k$ if there exists a $\delta_k \in (0,1)$ such that
\begin{align*}
(1-\delta_k)\lVert \hat{\xb}\rVert_2^2\le\lVert\Psib\hat{\xb}\rVert_2^2\le(1+\delta_k)\lVert \hat{\xb}\rVert_2^2
\end{align*}
holds for all $\hat{\xb} \in \Sigma_k$
\end{definition}

Several random constructions for the sampling matrix $\Phib$ have been studied. Each specific construction interacts in a different way with the sparsity basis $\Ub$. More structured constructions typically need a larger number of measurements for $\Psib$ to satisfy the RIP with constant probability. It is known that a sampling matrix $\Phib$ made of independent and identically distributed (i.i.d.) sub-Gaussian random variables is universal, meaning that the RIP holds with $m=O(k \log n)$ regardless of the basis $\Ub$. However, dense random matrices are not practical in many scenarios due to excessive memory requirements or architectural constraints in the sampling procedure (e.g., in the graph setting, they would violate locality). Toeplitz \cite{Haupt_Toeplitz_IT}, circulant \cite{JLCirculant_improved} and block-diagonal \cite{Yap_block,eftekhari2015restricted} matrices are more structured random matrices suitable for common acquisition systems \cite{Duarte_structured, Valsesia_compressive} and they are also known to satisfy the RIP, albeit requiring more measurements and not being uniformly good for all bases.  

\subsection{Related work}
\label{sec:bkg_related}

Several works have addressed the problem of sampling signals defined on graphs, under the assumption of sparsity in a transform domain, i.e., only up to $k$ nonzero coefficients in the positions defined by the support. The most common model assumes a bandlimited signal, i.e., a signal where the sparsity basis is the GFT and where the support coincides with the first $k$ frequency coefficients. 
The sampling problem in the graph signal setting is typically concerned with trading off locality and reconstruction accuracy. Locality implies that the signal is observed only at a small number of nodes or through short-range interactions of a node with its neighbors. We will use the general term \emph{measurement} to refer to either a pointwise selection of a signal coefficient or the result of a more complex operation such as a linear combination. The result of the sampling operation is a vector of measurements such that the original signal can be reconstructed via linear or nonlinear interpolation methods. 

Puy et al. \cite{Puy_random} show that random selection of nodes can be an effective technique. Indeed, Chen et al. \cite{Chen_sampling} proved that uniform random sampling is optimal for the Erd\H{o}s-R\'{e}nyi graph, thus requiring only $k$ measurements. However, uniform random sampling is in general suboptimal because the required number of measurements is heavily affected by the structure of the graph. Puy et al. \cite{Puy_random} improve over uniform sampling by proposing a weighted sampling scheme that uses information on the signal support and the transform basis to bias the sampling procedure towards nodes where the signal is highly localized. Albeit improving over uniform sampling, this method requires knowledge of the support and computation of the sparsity basis before sampling. It is also suboptimal as it generally requires more than $k$ measurements.

Tsivero et al. \cite{Tsitsvero_uncertainty} study node-frequency uncertainty principles to derive necessary conditions for signal recovery after pointwise sampling. They also discuss greedy node selection strategies that approximate the optimal sampling objective, aimed at minimizing the MSE of reconstruction. Such greedy strategies require the knowledge of the signal support and are typically computationally expensive, e.g., requiring multiple singular value decompositions.

Sakiyama et al. \cite{sakiyama_2016,sakiyama_2017} study node selection methods in the context of sensor position selection. In \cite{sakiyama_2016} an assumption of a bandlimited signal is made and nodes are selected so that the cutoff frequency of the signal restricted to the sampling set is maximized. This is done using three possible techniques requiring access to the eigenvectors of the graph Laplacian. Thus this method only targets bandlimited signals (a special case of known support) and requires knowledge of the basis at sampling time. In \cite{sakiyama_2017} a different method is proposed where the localization operator is used to choose nodes so that the operators localized at the nodes of choice cover well the whole graph with minimal overlap. This is a rather efficient method since the localization operator can be implemented with polynomial filters, thus not requiring the full basis. While the method does not strictly need support information and can in principle be applied even if the signal support is unknown, there is an implicit assumption on the support in the choice of the kernel for the localization operator. In fact, the exponential kernel used by the authors implicitly assumes a bandlimited model, and it is not obvious how to choose a kernel for an arbitrary support.

Marques et al. \cite{Marques_successive}, motivated by an analogy with classical sampling, propose a sampling strategy computing measurements as the result of iterated applications of the graph shift operator (e.g., the adjacency matrix) observed at one or more nodes. This strategy can guarantee perfect reconstruction with the optimal number of measurements and does not require knowledge of the support of the signal, so it can also be used when the support is unknown. The technique is equivalent to linear sampling where the sampling matrix has a Vandermonde structure. Unfortunately, Vandermonde matrices are known to be poorly conditioned \cite{Pan_vandermonde} resulting in severe instability even for moderately-sized graphs. The analogy drawn by the authors with classical sampling does not consider that the Vandermonde structure arising in classical sampling is a unique special case that is well conditioned, as it results in the DFT \cite{Pan_vandermonde}. Moreover, the method is highly dependent on the specific nodes chosen as observation points and it is not clear how to perform the selection.

In this paper, we present a sampling method based on random linear combinations of signal coefficients localized at a subset of nodes and their neighbors. The method is not computationally expensive (e.g., it does not require the sparsity basis to be known) and non-adaptive, i.e., it works without knowledge of the signal support. We provide a sufficient condition to select the sampling set of nodes, which must be a superset of a dominating set of the graph. Thanks to such sampling nodes and the use of random coefficients in the linear combinations, we are able to prove the RIP for the resulting sampling matrix. The proposed method can achieve the optimal number of measurements when the support is known. When the support is unknown, the RIP guarantees stability provided a sufficient number of measurements is computed. Such number depends on the structure of the graph and the sparsity basis, as shown by the theoretical analysis in Section \ref{sec:theory}. Compared to the previously discussed approaches in the literature, the main advantage of the proposed method is its generality, i.e., the ability to work both when the signal support is known and when it is unknown, having at the same time theoretical guarantees for perfect and stable reconstruction for any kind of signal support. Such wide-reaching guarantees on performance are only possible since this is a non-adaptive aggregation sampling scheme. Schemes based on selection sampling of nodes may not require cooperation among nodes to exchange information but are adaptive, in the sense that they require more stringent conditions on the signals such as prior knowledge of the signal support or sparsity basis at sampling time or more specific signal models beyond sparsity, such as bandlimitedness. In a nutshell, techniques based on aggregation sampling, and especially the randomized scheme proposed in this paper, trade communications for generality.

\section{Randomized local aggregations}
\label{sec:method}
%Ricordarsi di scrivere la condizione che il ns metodo non raggiunge ottimlita se il numero di componeneti connesse e' maggiore della spraasita'. NB questo equivale a dire che alcuni sensori sono spenti

This section describes the proposed technique to sample graph signals via random linear combinations of signal coefficients localized to neighborhoods of a subset of the nodes of the graph, which we refer to as random local aggregations. The main idea is to select a subset of nodes and at each of them aggregate the signal coefficients of the node and its neighbors with a random linear combination in order to obtain a measurement. The node selection procedure and the use of random linear combinations allows a sampling procedure that is simultaneously \emph{localized}, i.e., it only requires interactions of the chosen node with its neighbors, and can guarantee perfect reconstruction and stability to noise. In particular, the proposed technique is equivalent to randomized linear sampling, similar to the random projections used in CS, with a sampling operator that satisfies the RIP, albeit not being dense, in general.

The proposed method does not rely on the knowledge of the sparsity support of the signal in a transformed domain (e.g., the knowledge of the active set of frequencies in the GFT) and can be used both when that is known or unknown. We want to compute a number of measurements $m$ greater than or equal to the sparsity $k$. When the support is known, choosing $m\geq k$ improves robustness to noise. Otherwise, this condition is needed to ensure correct recovery (the optimal $\ell_0$ minimization requires at least $2k$ measurements). The theoretical analysis in section \ref{sec:theory} shows that the proposed technique can obtain perfect reconstruction in absence of noise and with known support with the optimal number of measurements (i.e., $m=k$), for any graph signal with a dense sparsity basis, contrary to other techniques requiring a larger number of measurements unless the graph has a particular structure (e.g., Erd\H{o}s-R\'{e}nyi).

As a practical application example, let us think of the problem of deploying a sensor network having many low-power sensors that are equipped with short-range transmitters and a small number of (more expensive) sensors with long-range transmitters. The sensors acquire a signal defined over the nodes of a graph that may represent their communication links, and the goal is to compute a compact representation of the signal to be transmitted to a remote fusion center by the long range transmitters. The proposed algorithm allows that and requires a smaller number of long-range transmitters (or equivalently, a smaller data payload for the same number of transmitters). At the same time, no prior knowledge on the signal to be sampled is needed in the design phase, thus being suitable even for signals with non-stationary supports. 

\subsection{Measurements}
We consider a signal $\xb \in \RR^n$, defined over a directed graph $\G(\V,\E)$ with $|\V|=n$ and $\bar{\Nc}_i$ the in-neighborhood of node $i$ including a self-loop (from here onwards, the term neighborhood always refers to the in-neighborhood). The signal has a sparse representation under an orthonormal basis $\Ub$, $\xb = \Ub \hat{\xb}$ with $\Vert \hat{\xb} \Vert_0 \leq k$ and support $\Sx = \lbrace i : \hat{x}_i \neq 0 \rbrace$. We highlight that the method is highly general and we can choose any orthonormal basis $\Ub$. One example is the GFT matrix defined in Section \ref{sec:bkg_graph}. Let $m$ be the desired number of measurements and $\R \subseteq \V$ be the sampling set, i.e., the set of nodes at which measurements are computed. We also define a function $\pi: \R \to [1,m]$ mapping a node $i \in \R$ to its corresponding measurement index $\pi(i)$. Then, a measurement is computed for each $i \in \R$ as:
\begin{align*}
%\label{eq:measurement}
y_{\pi(i)} = \sum_{j \in \bar{\Nc}_i} \varphi_{\pi(i)j} x_j ,
\end{align*}
being each $\varphi_{\pii j}$ a zero-mean independent Gaussian random variable drawn as
\begin{align*}
\varphi_{\pii j} \sim N \left( 0,\frac{1}{g_j} \right)
\end{align*}
with 
\begin{align}
\label{eq:g}
g_j= \sum_{i \in \R} \vert \bar{\Nc}_{i} \cap \lbrace j \rbrace \vert
\end{align}
being the node multiplicity, i.e., the number of times node $j$ appears in the linear combinations. Notice that the node multiplicity only depends on the graph topology and the sampling set, which are known and precomputed before observing the signal.

In matrix form, the measurements are computed as
\begin{align*}
\yb = \Phib \xb
\end{align*}
where the sampling matrix $\Phib$ is defined as follows
\begin{equation}
\label{eq:def_phi}
\varphi_{\pii j}\sim\begin{cases} N \left( 0,\frac{1}{g_j} \right),& \mbox{if }j\in\Ncb_i\\
\delta_0,& \mbox{if }j\notin\Ncb_i\end{cases},
\end{equation}
where $\delta_0$ is a Dirac's delta distribution centered in 0. For an undirected graph $\Phib$ can be seen as the elementwise product of the adjacency matrix of the graph augmented to include self-loops and restricted to the rows of the sampling set $\bar{\Ab}_\R$ and a random matrix $\Xib$ with independent entries, $\Phib = \bar{\Ab}_\R \odot \Xib$.

\subsection{Selection of the sampling set}
\label{sec:samp_strat}
The sampling set must be chosen to ensure that the the norm of the signal is preserved in expectation, i.e. $\Ed \left[ \Phib^H\Phib \right] = \Ib$. A sufficient condition to ensure this is that the sampling set $\R$ is a superset of a dominating set $\D$ of the graph, i.e., $\R \supseteq \D$. %A dominating set is a subset of the nodes such that all the nodes in the graph are neighbors of at least one of the dominating nodes. 
Choosing the sampling set $\R$ to be a dominating set ensures that the sampling matrix $\Phib$ has no empty columns, since all the nodes (including those in $\R$, thanks to self-loops) have at least a neighbor in $\R$. Moreover, all the rows have random coefficients from a continuous distribution to ensure that they are linearly independent with probability 1, as proved in Theorem \ref{thm:rank}.

Given a graph $\G$, finding a dominating set of minimal size is known to be NP-hard \cite{Garey_computers}. However, good approximations can be found efficiently with greedy algorithms. For instance, an approximation within $O(\log \Delta)$ of the minimal size can be found with complexity $O(\log\vert \V\vert \log \Delta)$ being $\Delta$ the maximum vertex degree \cite{Jia2002}.

Depending on the desired number of measurements $m$ and the size of the dominating set found by the particular algorithm in use, three situations may occur:
\begin{enumerate}
\item $\vert \D \vert = m$: in this case $\R = \D$;
\item $\vert \D \vert > m$: we use $p$-hop random aggregations, i.e., define an equivalent graph having as adjacency matrix $\Bb$ the union of all $\Ab^l$ powers with $l = 1,\dots,p$, i.e., $b_{ij} = 1$ if and only if $\exists\  l\in [1,p]$ s.t. $a_{ij}^l=1$; $p$ is the minimum positive integer such that $\D^p$, the dominating set computed from $\Bb$, has cardinality $\vert \D^p \vert \leq m$.
\item $\vert \D^p \vert < m$: we add rows to $\Phib$ according to one of two strategies
\begin{itemize}
\item repetition of dominating nodes; 
\item insertion of new nodes.
\end{itemize}
\end{enumerate}
The two strategies essentially differ by which nodes are considered eligible for insertion to increase the number of measurements. Let us call $\I$ the set of nodes eligible for insertion.

In the former strategy, a dominating node is repeated\footnote{With slight abuse of notation we suppose that $\R$ contains two or more copies of the same node.}, i.e., $\I = \D^p$. If node $i \in \D^p$ is to be repeated, a new random combination of the signal coefficients in the neighborhood $\bar{\Nc}_i$ is computed. However, drawing new random coefficients does not guarantee that the new measurement is linearly independent as there may be a subgraph for which we are computing more measurements than signal coefficients. To avoid this, a solution is to have a greedy algorithm that before inserting the node in $\R$ checks if the rank of the resulting sampling matrix is full or not. This strategy is advantageous in terms of localization since it does not require increasing the number of transmissions within the nodes of graph, nor does it require observing more than $\vert \D^p \vert$ nodes. Notice, however, that checking if the new measurement is linearly independent can be expensive (up to $O(n^3)$ with Gaussian elimination) and it is not practical for large graphs. 

On the other hand, the latter strategy considers eligible for insertion only nodes that are not already in the sampling set, i.e., $\I = \V \setminus \R$. Insertion of new nodes always guarantees that the new measurement is linearly independent (the limit case being cliques forming square submatrices which however are still full rank with probability 1 due to the independent random coefficients). The complexity of this strategy is therefore limited to solving \eqref{eq:node_insert}, which is $O(n)$. However, this strategy reduces locality, up to $m=n$ when all nodes are in the sampling set.

The choice of the order to use for node insertion under both strategies is determined by the theoretical analysis in Section \ref{sec:theory}. In particular, we choose node $i^*$ as the eligible node having most neighbors with lowest node multiplicity $g$: 
\begin{align}
\label{eq:node_insert}
i^* &= \arg\max_{i \in \I}\ \Bigl| \Bigl\lbrace j : j \in \bar{\Nc}_i , g_j = g_\text{min}^\I \Bigr\rbrace \Bigr|,\\g_\text{min}^\I &= \min_{\ell \in \bigcup_{i \in \I} \bar{\Nc}_i } g_\ell. \nonumber
\end{align}
Informally, this is the node which has most neighbors that have been least seen by the various local aggregations, thus ensuring that all the signal coefficients have been observed enough times. This is particularly important when the signal is highly localized to a set of nodes that are not very well connected with the rest of the graph. Notice that the proposed criterion to select the node bears some similarity with the entropy-based criterion in \cite{Krause2008}. However, it differs in the signal model adopted, being sparse in a transform basis rather than a Gaussian process.

Notice that we have proposed $p$-hop random aggregations when $\vert \D \vert > m$. They represent a tradeoff between number of measurements and locality. For instance, in a sensor network the sampling nodes might be the only ones equipped for long-range communication.   
Under this example, since the proposed method computes linear combinations of coefficients in some neighborhood, such coefficients must be transmitted from the nodes to the nodes in the sampling set. This is also common to other methods performing aggregations, such as \cite{Marques_successive, wang2016local}. Defining one transmission as a real-valued scalar crossing an edge in the graph, the total number of transmissions for the proposed method depends on the strategy that is used to achieve $\vert \R \vert = m$:
\begin{itemize}
\item Repetition of dominating nodes: in this case, the number of transmissions is only determined by the size of the $p$-hop dominating set. Extra measurements will not cost extra transmissions because they are just different random linear combinations of the same signal coefficients already transmitted. In particular, the number of transmissions can be upper bounded\footnote{This is an upper bound because some paths may be counted multiple times.} by
\begin{align*}
n_{\text{TX}} \leq \sum_{j \in \D^p} \sum_{i=1}^p i \cdot \vert \Nc_j^i \vert
\end{align*}
being $\Nc_j^i$ the set of nodes whose shortest path to node $j$ is exactly $i$ edges long.
\item Insertion of new nodes: this strategy requires extra transmissions every time a node is inserted:
\begin{align*}
n_{\text{TX}} \leq \sum_{j \in \R} \sum_{i=1}^p i \cdot \vert \Nc_j^i \vert
\end{align*}
\end{itemize}

\section{Theoretical analysis}
\label{sec:theory}
In this section, we prove the main theoretical results, showing that the proposed method guarantees reconstruction and stability to noise. Then, guided by the results of the theoretical analysis we discuss the effects of system characteristics and free parameters.

\subsection{Restricted Isometry Property}
\label{sec:rip}
The proposed sampling method can be seen as a generalization of the method presented in \cite{eftekhari2015restricted}, where the authors use block diagonal matrices as sampling matrices. In our case, $\Phib$ is not a block-diagonal matrix, but we can expand the mathematical tools developed in \cite{eftekhari2015restricted} to our case. For this reason, the technique we use to prove that $\Psib$ satisfies the RIP is similar to the one proposed in \cite{eftekhari2015restricted}. 

Before presenting the main results of the section, we first introduce some preliminary definitions, referring to the construction of the sampling matrix $\Phib$ and sampling set $\R$ presented in Sec. \ref{sec:method}.
\begin{definition}
Given a graph $\G$, the graph-basis coherence $\mu(\Ub)$ of an orthobasis $\Ub\in\CC^{n\times n}$ is defined as follows
\begin{equation}
\label{eq:coherence}
\mu(\Ub)=\min\left(\sqrt{|\Ncb^*|}\lVert \Ub\rVert_{\scalebox{.6}{\textrm{MAX}}}, 1\right),
\end{equation}
where $|\Ncb^*|=\max_{i\in\R}|\Ncb_i|$ and $\lVert\Ub\rVert_{\scalebox{.6}{\textrm{MAX}}}=\max_{i,j} |\Ub_{ij}|$.
\end{definition}
\begin{definition}
\label{def:gmin}
The minimum node multiplicity $\gmin$ of $\Phib$ is defined as follows
\[
\gmin=\min_{k\in\V}g_k
\]
with $g_k$ defined as in \eqref{eq:g}.
\end{definition}
We can now introduce our main theoretical result.
\begin{theorem}
\label{thm:main}
Suppose $\Phib\in\RR^{m\times n}$ is defined as in \eqref{eq:def_phi} and suppose $\Ub\in\CC^{n\times n}$ is an orthobasis for $\CC^n$. Set $\Sigma_k=\{\xc\in\CC^n\mbox{ s.t. }\lVert \xc\rVert_0\le k,\ \lVert \xc\rVert_2=1 \}$. If $k\ge 1$ and 
\[
\label{eq:gmin_rip}
\gmin\ge c\delta^{-2}\mu^2 k \log^2 k \log^2 n,
\]
where $c$ is a positive constant, $\mu$ is the graph-basis coherence defined in \eqref{eq:coherence} and $0<\delta<1$, then there exists $\delta_k\le\delta<1$ such that, for all $\xb\in\RR^n$ with $\xc\in\Sigma_k$, it holds that
\[
(1-\delta_k) \le\lVert\Psib\xc\rVert_2^2\le(1+\delta_k)
\]
except with a probability of at most $O(n^{-\log n \log^2 k})$.%$\Psib=\Phib\Ub$ satisfies the Restricted Isometry Property 
\end{theorem}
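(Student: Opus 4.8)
The plan is to recognize $\sup_{\xc\in\Sigma_k}\bigl|\,\lVert\Psib\xc\rVert_2^2-1\,\bigr|$ as the supremum of a Gaussian chaos process and to control it with the machinery of Krahmer, Mendelson and Rauhut, exactly as in the block-diagonal analysis of \cite{eftekhari2015restricted}. First I would vectorize the randomness: writing $\varphi_{\pii j}=\xi_{\pii j}/\sqrt{g_j}$ collects the nonzero entries of $\Phib$ into a single standard Gaussian vector $\boldsymbol{\xi}$, and for $\mathbf{v}=\Ub\xc$ the $\pii$-th measurement reads $\sum_{j\in\Ncb_i}\xi_{\pii j}\,v_j/\sqrt{g_j}$. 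Hence $\Phib\mathbf{v}=V_{\mathbf{v}}\boldsymbol{\xi}$ for a matrix $V_{\mathbf{v}}$ depending linearly on $\mathbf{v}$, and crucially each row of $V_{\mathbf{v}}$ reads a disjoint block of $\boldsymbol{\xi}$, so $V_{\mathbf{v}}V_{\mathbf{v}}^H$ is diagonal -- the same block structure exploited in \cite{eftekhari2015restricted}. Setting $\Ac=\{V_{\Ub\xc}:\xc\in\Sigma_k\}$, the target becomes $\sup_{A\in\Ac}\bigl|\lVert A\boldsymbol{\xi}\rVert_2^2-\Ed\lVert A\boldsymbol{\xi}\rVert_2^2\bigr|$. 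The process is correctly centered at $1$ because $\Ed[\Phib^H\Phib]=\Ib$, which in turn is guaranteed by the normalization $1/g_j$ together with $\R\supseteq\D$ (so that every $g_j\ge1$).

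The chaos bound is driven by three geometric parameters of $\Ac$ in the operator norm $\lVert\cdot\rVert_{2\to2}$: the Frobenius radius $d_F(\Ac)$, the operator-norm radius $d_{2\to2}(\Ac)$, and the Talagrand functional $\gamma_2(\Ac,\lVert\cdot\rVert_{2\to2})$. Two of these are immediate. Since $\lVert V_{\mathbf{v}}\rVert_F^2=\sum_{i\in\R}\sum_{j\in\Ncb_i}|v_j|^2/g_j=\sum_j|v_j|^2$ by the node-multiplicity identity \eqref{eq:g}, and $\lVert\Ub\xc\rVert_2=1$, I get $d_F(\Ac)=1$ exactly. The diagonal structure of $V_{\mathbf{v}}V_{\mathbf{v}}^H$ gives $\lVert V_{\mathbf{v}}\rVert_{2\to2}^2=\max_{i\in\R}\sum_{j\in\Ncb_i}|v_j|^2/g_j$; bounding $|v_j|^2\le k\lVert\Ub\rVert_{\scalebox{.6}{\textrm{MAX}}}^2$ via Cauchy--Schwarz on the $k$-sparse $\xc$, together with $\sum_{j\in\Ncb_i}|v_j|^2\le1$, yields $\sum_{j\in\Ncb_i}|v_j|^2/g_j\le\mu^2 k/\gmin$, hence $d_{2\to2}(\Ac)\le\mu\sqrt{k/\gmin}$.

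The heart of the argument, and the step I expect to be hardest, is the estimate of $\gamma_2(\Ac,\lVert\cdot\rVert_{2\to2})$, which I would obtain from Dudley's entropy integral using two complementary covering bounds in the operator-norm metric. At fine scales $u$ I would use a volumetric bound: $\Sigma_k$ is a union of $\binom{n}{k}$ $k$-dimensional spheres, giving $\log N(\Ac,\lVert\cdot\rVert_{2\to2},u)\lesssim k\log(en/k)+k\log(d_{2\to2}(\Ac)/u)$. At coarse scales I would write $V_{\Ub\xc}=\sum_l\hat x_l\,W_l$ with $W_l=V_{\Ub e_l}$ and $\lVert\xc\rVert_1\le\sqrt{k}$, placing $\Ac$ inside $\sqrt{k}$ times the convex hull of $\{\pm W_l\}$, and apply Maurey's empirical method in the space of matrices with operator norm. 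This is where the noncommutative Khintchine / Rudelson inequality enters, contributing the effective type-2 constant $\sqrt{\log n}$ of the Schatten-$\infty$ class and producing $\log N(\Ac,\lVert\cdot\rVert_{2\to2},u)\lesssim k\,\log^2 n\,\max_l\lVert W_l\rVert_{2\to2}^2/u^2$. Since $\max_l\lVert W_l\rVert_{2\to2}^2=\max_l\max_{i\in\R}\sum_{j\in\Ncb_i}|\Ub_{jl}|^2/g_j\le\mu^2/\gmin$, balancing the two regimes and carrying out the integral gives $\gamma_2(\Ac,\lVert\cdot\rVert_{2\to2})\lesssim\mu\sqrt{k/\gmin}\,\log k\,\log n$. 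Getting the logarithmic powers right, and verifying that the operator-norm (rather than Frobenius) metric governs the coarse-scale covering, is the delicate part.

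Finally I would feed $d_F(\Ac)=1$, $d_{2\to2}(\Ac)\le\mu\sqrt{k/\gmin}$ and the $\gamma_2$ estimate into the chaos tail inequality, whose deterministic term scales like $\gamma_2(\gamma_2+d_F)$ and whose subexponential and subgaussian widths scale like $U=d_{2\to2}^2$ and $V=d_{2\to2}(\gamma_2+d_F)$. Imposing $\gmin\ge c\,\delta^{-2}\mu^2 k\log^2 k\log^2 n$ forces $d_{2\to2}(\Ac)\lesssim\delta/(\log k\log n)$ and $\gamma_2(\Ac,\lVert\cdot\rVert_{2\to2})\lesssim\delta$, so with $t\asymp\delta$ both the mean deviation and the chosen deviation stay below $\delta$, giving $\delta_k\le\delta$, while the exponent $\min(t^2/V^2,t/U)\gtrsim\log^2 k\log^2 n$ turns the tail into $\exp(-c'\log^2 n\log^2 k)=O(n^{-\log n\log^2 k})$, as claimed; taking $c$ large enough absorbs all implicit constants. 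A last routine point is to reduce the complex setting $\Ub\in\CC^{n\times n}$ to a real chaos by splitting real and imaginary parts of $V_{\mathbf{v}}$, which only doubles dimensions and leaves every order unchanged.
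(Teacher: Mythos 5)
Your proposal is correct and follows essentially the same route as the paper: both rewrite $\lVert\Phib\xb\rVert_2^2$ as a chaos process $\lVert \mathbf{T}\e\rVert_2^2$ over a set of block-structured matrices with disjoint row supports, compute $d_F=1$ and $d_2\le\mu\sqrt{k/\gmin}$ exactly as you do, invoke the covering argument of \cite{eftekhari2015restricted} for $\gamma_2\lesssim\mu\sqrt{k/\gmin}\log k\log n$, and conclude via the Krahmer--Mendelson--Rauhut tail bound with $t=\delta$. The only difference is that you sketch the Dudley/Maurey details of the $\gamma_2$ estimate, which the paper simply defers to the cited reference.
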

Before presenting the proof of Theorem \ref{thm:main}, we first recall an important lemma that will be used during the proof.
\begin{lemma}[Theorem 3.1 in \cite{krahmer2014suprema}]
\label{thm:gamma}
Let $\Ac\subset\CC^{m\times n}$ be a set of matrices, and let $\e$ be a random vector whose entries are i.i.d., zero-mean, unit-variance random variables with sub-Gaussian norm $\tau$. Set
\[
\begin{split}
&d_F(\Ac)=\sup_{\mathbf{T}\in\Ac}\lVert \mathbf{T}\rVert_F,\\  
&d_2(\Ac)=\sup_{\mathbf{T}\in\Ac}\lVert\mathbf{T}\rVert_2,
\end{split}
\]
and
\[
\begin{split}
&E_1=\gammas(\gammas+d_F(\Ac))+d_F(\Ac)d_2(\Ac),\\
&E_2=d_2(\Ac)(\gammas+d_F(\Ac)),\\
&E_3=d_2^2(\Ac)
\end{split},
\]
where the $\gamma_2$-functional of $\Ac$ is a geometrical property of $\Ac$, as defined in \cite{krahmer2014suprema}. Then, for $t>0$,
\begin{align*}
\log\PP\left[\sup_{\mathbf{T}\in\Ac}\Bigl|\lVert \mathbf{T}\e\rVert_2^2-\Ed[\lVert \mathbf{T}\e\rVert_2^2]\Bigr| \geq E_1 + t \right]\le \\ -c(\tau)\min\left(\frac{t^2}{E_2^2},\frac{t}{E_3}\right),
\end{align*}
where $c(\tau)$ is a positive constant that depends on $\tau$.
\end{lemma}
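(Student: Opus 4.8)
\emph{Proof proposal.} The quantity $\lVert\mathbf{T}\e\rVert_2^2=\e^{*}\mathbf{T}^{*}\mathbf{T}\e$ is a second-order chaos in $\e$, so the object to be controlled is the supremum of a chaos process indexed by $\Ac$. The plan is to bound it by generic chaining with respect to the single operator-norm metric $\lVert\mathbf{T}-\mathbf{S}\rVert_2$, deriving the increment estimates from the Hanson--Wright inequality and then converting the resulting moment/chaining bound into the stated tail estimate via a generic-chaining tail theorem. I would first treat the Gaussian case, where the three quantities $E_1,E_2,E_3$ emerge transparently, and then address the general sub-Gaussian case worded in the lemma.

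The step that makes the whole reduction possible is a pair of norm inequalities for the chaos increments. Writing $\mathbf{B}=\mathbf{T}^{*}\mathbf{T}-\mathbf{S}^{*}\mathbf{S}=\mathbf{T}^{*}(\mathbf{T}-\mathbf{S})+(\mathbf{T}-\mathbf{S})^{*}\mathbf{S}$ and using the mixed bound $\lVert\mathbf{P}\mathbf{Q}\rVert_F\le\lVert\mathbf{P}\rVert_2\lVert\mathbf{Q}\rVert_F$ together with its transpose, I would show
\[
\lVert\mathbf{B}\rVert_F\le 2\,d_F(\Ac)\,\lVert\mathbf{T}-\mathbf{S}\rVert_2,\qquad \lVert\mathbf{B}\rVert_2\le 2\,d_2(\Ac)\,\lVert\mathbf{T}-\mathbf{S}\rVert_2 .
\]
Thus both the Frobenius and the spectral size of an increment are governed by the \emph{same} operator-norm distance, scaled respectively by $d_F(\Ac)$ and $d_2(\Ac)$. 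This is precisely why the chaining functionals collapse onto $\gammas$ (scaled by $d_F$ for the sub-Gaussian part and by $d_2$ for the sub-exponential part) and the diameters collapse onto $d_F(\Ac)d_2(\Ac)$ and $d_2^2(\Ac)$, producing the combinations that appear in $E_1$, $E_2$ and $E_3$.

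For the Gaussian case I would not chain the chaos directly. Instead I would chain the \emph{un-squared} process $W_{\mathbf{T}}=\lVert\mathbf{T}\e\rVert_2-\Ed\lVert\mathbf{T}\e\rVert_2$, which by Gaussian concentration is sub-Gaussian with increment parameter $\lVert\mathbf{T}-\mathbf{S}\rVert_2$; chaining then gives $\Ed\sup_{\mathbf{T}}|W_{\mathbf{T}}|\lesssim\gammas$ and sub-Gaussian concentration of $\sup_{\mathbf{T}}|W_{\mathbf{T}}|$ at scale $d_2(\Ac)$. The algebraic identity
\[
\lVert\mathbf{T}\e\rVert_2^2-\Ed\lVert\mathbf{T}\e\rVert_2^2=-\,\mathrm{Var}\lVert\mathbf{T}\e\rVert_2+2\,\Ed[\lVert\mathbf{T}\e\rVert_2]\,W_{\mathbf{T}}+W_{\mathbf{T}}^2
\]
then yields the claim: the variance term contributes $d_2^2(\Ac)\le d_F(\Ac)d_2(\Ac)$; the linear term contributes $d_F(\Ac)\gammas$ to the center (since $\Ed\lVert\mathbf{T}\e\rVert_2\le\lVert\mathbf{T}\rVert_F\le d_F(\Ac)$) and a sub-Gaussian fluctuation at scale $d_F(\Ac)d_2(\Ac)$; and the quadratic term $W_{\mathbf{T}}^2$ contributes $\gammas^2$ to the center (the source of the $\gamma_2^2$ in $E_1$) together with fluctuations at scales $\gammas\,d_2(\Ac)$ (sub-Gaussian, absorbed into $E_2$) and $d_2^2(\Ac)$ (sub-exponential, giving $E_3$). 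Collecting terms reproduces the center $E_1$ and the mixed tail $\min(t^2/E_2^2,\,t/E_3)$.

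The main obstacle is the general sub-Gaussian case, where the clean Lipschitz-of-Gaussian concentration underlying the previous paragraph is unavailable, so $W_{\mathbf{T}}$ can no longer be treated as a sub-Gaussian process. Here I would split the chaos into its diagonal part $\sum_k(\mathbf{T}^{*}\mathbf{T})_{kk}(\varepsilon_k^2-1)$, a process of independent centered sub-exponentials handled by Bernstein's inequality and chaining, and its off-diagonal part, which I would decouple into $\e^{*}\mathbf{B}\e'$ with an independent copy $\e'$ and control through the moment form of Hanson--Wright, $\lVert\e^{*}\mathbf{B}\e-\Ed[\e^{*}\mathbf{B}\e]\rVert_{L^p}\lesssim\sqrt{p}\,\lVert\mathbf{B}\rVert_F+p\,\lVert\mathbf{B}\rVert_2$, combined with the two norm reductions above. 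The delicate point — and where I expect the real work to lie — is that a naive application of generic chaining to these increments produces a $\gamma_1(\Ac,\lVert\cdot\rVert_2)$ term from the sub-exponential ($p\lVert\mathbf{B}\rVert_2$) part; the heart of the argument is the refined second-order chaining of \cite{krahmer2014suprema} that replaces $d_2(\Ac)\gamma_1$ by the quadratic quantity $\gammas^2$ plus $d_F(\Ac)d_2(\Ac)$, mirroring the way squaring $\sup_{\mathbf{T}}|W_{\mathbf{T}}|$ produced $\gammas^2$ in the Gaussian case. A final application of Markov's inequality to the resulting moment bound (or of the generic-chaining tail theorem directly) then yields the stated probability estimate.
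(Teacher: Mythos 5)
The paper contains no proof of this lemma to compare against: it is imported verbatim as Theorem~3.1 of \cite{krahmer2014suprema} (Krahmer--Mendelson--Rauhut) and used as a black box in the proof of Theorem~\ref{thm:main}. Your attempt must therefore be judged as a proof of the cited theorem itself. On that score, the Gaussian half of your outline is sound: the increment inequalities $\lVert \mathbf{T}^{*}\mathbf{T}-\mathbf{S}^{*}\mathbf{S}\rVert_F\le 2d_F(\Ac)\lVert\mathbf{T}-\mathbf{S}\rVert_2$ and $\lVert \mathbf{T}^{*}\mathbf{T}-\mathbf{S}^{*}\mathbf{S}\rVert_2\le 2d_2(\Ac)\lVert\mathbf{T}-\mathbf{S}\rVert_2$ are correct, and the reduction through the un-squared process $W_{\mathbf{T}}$ --- sub-Gaussian increments at scale $\lVert\mathbf{T}-\mathbf{S}\rVert_2$ via Lipschitz--Gaussian concentration, chaining giving $\Ed\sup_{\mathbf{T}}|W_{\mathbf{T}}|\lesssim\gammas$, concentration of the supremum at scale $d_2(\Ac)$, then your squaring identity --- does reproduce the center $E_1$ and the mixed tail $\min(t^2/E_2^2,\,t/E_3)$ up to constants (the variance term is controlled by Gaussian Poincar\'e, $\mathrm{Var}\,\lVert\mathbf{T}\e\rVert_2\le d_2^2(\Ac)\le d_F(\Ac)d_2(\Ac)$, as you use).

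The gap is that the lemma is stated for general sub-Gaussian entries, and there your text is a roadmap rather than a proof. Once Gaussian concentration of Lipschitz functions is unavailable, the entire strategy of chaining the un-squared process collapses, and you yourself locate the decisive step --- the refined second-order chaining that replaces the $\gamma_1(\Ac,\lVert\cdot\rVert_2)$ term (which naive chaining of Hanson--Wright/Bernstein increments inevitably produces) by $\gammas^2+d_F(\Ac)d_2(\Ac)$ --- inside \cite{krahmer2014suprema} itself. Since that replacement is the whole technical content of the theorem (decoupling, Hanson--Wright, and the diagonal/off-diagonal split are all standard), the proposal is circular as a self-contained proof of the statement as written. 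Given that the paper only ever cites this result, deferring to \cite{krahmer2014suprema} is a defensible posture; but you should present your argument as (i) a genuine proof of the Gaussian special case and (ii) an accurate structural description of the Krahmer--Mendelson--Rauhut proof in the sub-Gaussian case, rather than as a proof of the lemma in full generality.
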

% \begin{lemma} AGGIUNGI CITAZIONE Let $\Ac\subset\CC^{m\times n}$ be a set of matrices, then
% \[
% \gammas\le c\int_0^{\infty}\log^{\frac{1}{2}}(\eta(\Ac,\lVert \cdot\rVert_2,\nu))d\nu,
% \]
% where $\eta(\Ac,\lVert \cdot\rVert_2,\nu)$ is the covering number of $\Ac$ at resolution $\nu$ with respect to the metric $\lVert \cdot\rVert_2$.
% \end{lemma}
After having introduced this result, we can now present the proof of Theorem \ref{thm:main}.
\begin{proof}
First of all, we note that
\[
\begin{split}
\Ed[\lVert\Phib\xb\rVert_2^2]&=\Ed[\lVert\Psib\xc\rVert_2^2]=\xc^T\Ed[\Ub^H\Phib^H\Phib\Ub]\xc\\
&=\xc^T\Ub^H\Ed[\Phib^H\Phib]\Ub\xc=\xc^T\Ub^H\Ub\xc=\xc^T\xc=1,
\end{split}
\]
where the fourth equality above follows because the sampling set $\R$ is a superset of the dominating set $\D^p$ and $\varphi_{\pii j}\sim N\left(0,\frac{1}{g_j}\right)$ if $j\in\Ncb_i$.
In this way, we can define the restricted isometry constant $\delta_k$ as follows
\[
\delta_k=\sup_{\xc\in\Sigma_k}\Bigl|\lVert\Psib\xc\rVert_2^2-1\Bigr|.
\]
Given $\delta<1$, our objective is to show that, under the conditions described in the statement of the theorem, $\delta_k\le\delta$.

We define the diagonal matrix $\Gb\in\RR^{n\times n}$ as
\[
\Gb=\begin{bmatrix}\frac{1}{\sqrt{g_1}}& &&\\
&\frac{1}{\sqrt{g_2}}&&\\
&&\ddots&\\
&&&\frac{1}{\sqrt{g_n}}\\
\end{bmatrix},
\]
 in addition, given $\R$, we define the vector $\xb_{r_i}\in\RR^n$, where $r_i$ with $1\le i\le m$ is the $i$-th element of $\R$, as follows
 \[
 x_{{r_i},j}=\begin{cases} x_j &\mbox{ if }j\in\Ncb_{r_i}\\
 0 &\mbox{ if }j\notin\Ncb_{r_i}
 \end{cases}
 \]
 and the block diagonal matrix $\Abb\in\RR^{m\times n m}$ as
 \[
 \Abb=\begin{bmatrix}
 \xb_{r_1}^T\Gb& &&\\
 &\xb_{r_2}^T\Gb&&\\
 &&\ddots&\\
 &&&\xb_{r_m}^T\Gb\\
 \end{bmatrix}.
 \]
Then, we can observe that
\[
\begin{split}
\lVert\Phib\xb\rVert&=\sum_{i=1}^m|\phib^T_i\xb|^2=\sum_{i=1}^m|\phib^T_i\xb_{r_i}|^2=\sum_{i=1}^m|\xb_{r_i}^T\phib_i|^2\\
&=\sum_{i=1}^m|\xb_{r_i}^T\Gb\e_i|^2=\lVert \Abb\e \rVert_2^2,
\end{split}
\]
where $\phib_i\in\RR^n$ is the $i$-th row of $\Phib$, $\e_i\in\RR^{n}$ is a rescaled version of $\phib_i$ having standard normal distribution and $\e\in\RR^{mn}$ is made by stacking all the vectors $\e_i$. Thus, if we define the matrix set $\Acc=\{\Abb\ :\ \xc\in\Sigma_k\}$, we have completely expressed our problem in the setting of Lemma \ref{thm:gamma}. To apply this lemma we have first to compute $d_2(\Acc)$, $d_F(\Acc)$ and $\gammasb$. In order to compute these quantities, we first observe that
\begin{equation}
\label{eq:n_ax1}
\begin{split}
\lVert\Abb\rVert_2&=\lVert\Abb\Abb^T\rVert_2^{\frac{1}{2}}\le\frac{1}{\sqrt{\gmin}}\max_{1\le i\le m}\lVert \xb_{r_i} \rVert_2\\
&=\frac{1}{\sqrt{\gmin}}\max_{1\le i\le m}\lVert \Ub_{\Ncb_{r_i}}\xc \rVert_2\\
&\le \sqrt{\frac{|\Ncb^*|}{\gmin}}\max_{1\le i\le n}<\mathbf{u}_{i},\xc>\\
&\le \sqrt{\frac{|\Ncb^*|}{\gmin}}\max_{1\le i\le n}\lVert\mathbf{u}_{i}\rVert_{\infty}\lVert\xc\rVert_1=\sqrt{\frac{|\Ncb^*|}{\gmin}}\lVert\Ub\rVert_{\scalebox{.6}{\textit{MAX}}}\lVert\xc\rVert_1,
\end{split}
\end{equation}
where $\Ub_{\Ncb_{r_i}}\in\CC^{|\Ncb_i|\times n}$ is the matrix $\Ub$ restricted to the $\Ncb_{r_i}$ rows and $\mathbf{u}_{i}\in\CC^n$ is the $i$-th row of $\Ub$. On the other hand, we can also write that
\begin{equation}
\label{eq:n_ax2}
\begin{split}
\lVert\Abb\rVert_2&\le\frac{1}{\sqrt{\gmin}}\max_{1\le i\le m}\lVert \xb_{r_i} \rVert_2\le\frac{1}{\sqrt{\gmin}}\lVert \xb \rVert_2\\
&=\frac{1}{\sqrt{\gmin}}\lVert \xc \rVert_2\le\frac{1}{\sqrt{\gmin}}\lVert \xc \rVert_1.
\end{split}
\end{equation}
From \eqref{eq:n_ax1} and \eqref{eq:n_ax2} we arrive at
\[
\begin{split}
\lVert\Abb\rVert_2&\le\frac{1}{\sqrt{\gmin}}\min\left(\sqrt{|\Ncb^*|}\lVert\Ub\rVert_{\scalebox{.6}{\textit{MAX}}},1\right)\lVert \xc \rVert_1\\
&=\frac{\mu}{\sqrt{\gmin}}\lVert \xc \rVert_1
\end{split}
\]
Now, we can compute the quantities used in Lemma \ref{thm:gamma}. First, we can compute $d_2(\Acc)$ as follows
\[
d_2(\Acc)=\sup_{\Abb\in\Acc}\lVert\Abb\rVert_2\le\frac{\mu}{\sqrt{\gmin}}\sup_{\xc\in\Sigma_k}\lVert \xc \rVert_1\le\mu\sqrt{\frac{k}{\gmin}},
\]
Second, we have that
\[
\begin{split}
d_F(\Acc)&=\sup_{\Abb\in\Acc}\sqrt{\mbox{tr}(\Abb\Abb^T)}=\sup_{\Abb\in\Acc}\sqrt{\sum_{i=1}^m\sum_{j=1}^n\frac{1}{g_j}x^2_{r_i,j}}\\
&=\sup_{\Abb\in\Acc}\lVert\xb\rVert_2=\sup_{\Abb\in\Acc}\lVert\xc\rVert_2=1.
\end{split}
\]
In order to compute $\gammasb$, we apply the same procedure used in \cite{eftekhari2015restricted} and we obtain that
\[
\gammasb\le c\mu\sqrt{\frac{k}{\gmin}}\log k \log n.
\]
Finally, we can now apply Lemma \ref{thm:gamma}. Given $\delta$ such that $0<\delta<1$ and assuming that $\gmin\ge c\delta^{-2}\mu^2 k \log^2 k \log^2 n$, we obtain that
\[
\begin{split}
    E_1&=\gammas(\gammas+d_F(\Ac))+d_F(\Ac)d_2(\Ac)\\
    &\le c\mu\sqrt{\frac{k}{\gmin}}\log k \log n\left(c\mu\sqrt{\frac{k}{\gmin}}\log k \log n+1\right)\\
    &+\mu\sqrt{\frac{k}{\gmin}}\le \delta(\delta+1)+\frac{\delta}{c\log k\log n}\le c_1\delta,
\end{split}
\]
\[
\begin{split}
    E_2&=d_2(\Ac)(\gammas+d_F(\Ac))\\
    &\le \mu\sqrt{\frac{k}{\gmin}}\left(c\mu\sqrt{\frac{k}{\gmin}}\log k \log n +1\right)\\
    &\le \frac{\delta}{c\log k \log n}(\delta+1)\le c_2 \frac{\delta}{\log k \log n},
\end{split}
\]
\[
\begin{split}
    E_3&=d_2^2(\Ac)\le\frac{k\mu^2}{\gmin}\le c_3 \frac{\delta^2}{\log^2 k\log^2 n}. \\
\end{split}
\]
Then, the tail bound in Lemma 1 can be written as
\[
\begin{split}
&\log\left(\PP\left[\sup_{\xc\in\Sigma_k}\Bigl|\lVert\Phi\xb\rVert_2^2-1\Bigr|>c_1\delta + t \right] \right)\\
&\le -c(\tau)\min\left(t^2\delta^{-2}\log^2 k \log^2 n, t\delta^{-2}\log^2 k \log^2 n\right),
\end{split}
\]
If we set $t=\delta$, we obtain
\[
\log\left(\PP\left[\sup_{\xc\in\Sigma_k}\Bigl|\lVert\Phi\xb\rVert_2^2-1\Bigr|>\delta \right] \right)\le -c\log^2 k\log^2 n,
\]
which completes the proof.
\end{proof}
We point out that Theorem \ref{thm:main} does not require $\Phib$ to be full row-rank. However, adding measurements that are linear combination of the previous ones is not efficient because these new measurements do not add new information. If we consider a sampling strategy without repeated nodes like the second one presented in Section \ref{sec:samp_strat}, we have the theoretical guarantee that $\Phib$ is full row-rank, as shown in the following theorem.
\begin{theorem}
\label{thm:rank}
Given a sampling set $\R$, %such that $\D \subseteq \R\subseteq\V$, and
where there are no repeated copies of the same node, set $\Phib = \bar{\Ab}_\R \odot \Xib$, where $\Xib$ is a random matrix with independent entries drawn from a continuous distribution. Then, $\Phib$ is a full row-rank matrix with probability 1.
\end{theorem}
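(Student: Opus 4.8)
The plan is to reduce full row-rank to the non-vanishing of a single $m\times m$ minor and then invoke the classical fact that a nonzero polynomial in finitely many real variables vanishes only on a set of Lebesgue measure zero. Since $\Phib\in\RR^{m\times n}$ with $m\le n$, it has full row-rank exactly when at least one $m\times m$ submatrix, obtained by selecting $m$ columns, has nonzero determinant. I would therefore exhibit one explicit choice of columns for which the determinant, viewed as a polynomial in the independent random entries of $\Xib$, is not identically zero.

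The structural core of the argument is the observation that the sampling set contains no repeated nodes and that $\bar{\Ab}$ includes self-loops. Writing $\R=\{r_1,\dots,r_m\}$ with the $r_i$ pairwise distinct, I would select exactly the columns indexed by $r_1,\dots,r_m$. Because $\bar a_{ii}=1$ for every node, the self-loop guarantees that for the row associated with $r_i$ the diagonal entry $(\Phib)_{i,r_i}=\Xi_{i,r_i}$ is genuinely present, i.e. it is an entry of the pattern $\bar{\Ab}_\R$ that survives the Hadamard masking rather than being forced to zero. The map $i\mapsto r_i$ is thus a perfect matching of the rows into the selected columns; equivalently, Hall's condition for the bipartite support graph with rows indexed by $\R$ and columns by the neighborhoods $\Ncb_{r_i}$ is satisfied trivially by the self-loops.

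Next I would argue that the existence of this matching forces the minor to be a nonzero polynomial. Expanding the determinant of the selected $m\times m$ submatrix by the Leibniz formula yields a signed sum over permutations $\sigma$ of the monomials $\prod_{i} (\Phib)_{i,\sigma(i)}$. Since all entries of $\Xib$ are independent, each position $(i,j)$ carries a distinct indeterminate, so distinct permutations produce distinct monomials and no cancellation can occur. The permutation corresponding to the self-loop matching contributes the monomial $\pm\prod_{i=1}^m \Xi_{i,r_i}$, whose factors are all genuine random variables; hence this term cannot be annihilated and the determinant is a nonzero polynomial in the entries of $\Xib$.

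Finally, the probabilistic conclusion follows from the standard measure-zero argument: a nonzero polynomial in finitely many variables has a zero set of Lebesgue measure zero, and because the entries of $\Xib$ are independent and each is drawn from a continuous distribution, their joint law is absolutely continuous with respect to Lebesgue measure, so $\PP[\det=0]=0$ for this particular minor. Consequently $\Phib$ fails to be full row-rank with probability $0$, which proves the claim. The only step requiring care is the no-cancellation claim in the Leibniz expansion; I expect this to be the main obstacle to state cleanly, and the key to it is precisely the independence of the entries, which keeps the monomials of different permutations formally distinct.
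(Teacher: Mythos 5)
Your proof is correct, and it takes a genuinely different route from the paper's. The paper argues row by row: it asserts a characterization of when a row $\phib_{\pi(i)}$ is almost surely linearly dependent on a subset $\{\phib_{\pi(j)}\}_{j\in\W_i}$ of the other rows, namely that the union of supports $\C_i=\bigcup_{j\in\W_i}\Ncb_j$ must contain $\Ncb_i$ while satisfying $|\W_i|\ge|\C_i|$, and then uses the self-loops together with the distinctness of the sampling nodes to show this condition can never hold (if $i\in\C_i$ then $|\C_i|\ge|\W_i|+1$; if $i\notin\C_i$ then the self-loop on $i$ forces $\Ncb_i\not\subseteq\C_i$). You instead exhibit one explicit $m\times m$ minor -- the columns indexed by the (pairwise distinct) sampling nodes -- observe that the self-loops place the entire diagonal of that submatrix inside the random support, and conclude from the Leibniz expansion that the determinant contains the monomial $\pm\prod_i\Xi_{i,r_i}$ with coefficient $\pm 1$, hence is a not-identically-zero polynomial and therefore almost surely nonzero. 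Both arguments ultimately rest on the same structural fact (the self-loops give a system of distinct representatives for the row supports precisely because no node is repeated), but your version is more self-contained: the paper's ``linearly dependent with probability 1 if and only if\dots'' claim is stated without proof and is itself the delicate probabilistic step, whereas you reduce everything to two elementary facts, that distinct permutations yield formally distinct monomials (so no cancellation is possible) and that a nonzero polynomial vanishes on a Lebesgue-null set. Your no-cancellation step is indeed sound: each monomial $\prod_i a_{i,\sigma(i)}$ records, for every row, which column variable it uses, so it determines $\sigma$ uniquely. One minor remark: you invoke absolute continuity of the joint law, but the conclusion holds under the weaker hypothesis actually stated in the theorem (atomless marginals), by induction on the number of variables; this does not affect the validity of your argument for the Gaussian entries used in the paper.
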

 \begin{proof}

Let $\phib_{\pii}$ be the row of $\Phib$ corresponding to node $i$. Given a subset $\W_i$ of the set $\R-\{i\}$, we define the union of the supports of $\{\phib_{\pij}\}_{j\in\W_i}$ as the set $\C_i=\bigcup_{j\in\W_i}\Ncb_j$.  Since the nonzero entries of $\Phib$ are drawn independently from a continuous distribution, we have that $\phib_{\pii}$ is linearly dependent on $\{\phib_{\pij}\}_{j\in\W_i}$ with probability 1 if and only if there exists a subset $\W_i\subseteq\R-\{i\}$ such that $\Ncb_i\subseteq\C_i$ and $|\W_i|\ge|\C_i|$, i.e., $\mbox{span}(\{\phib_{\pij}\}_{j\in\W_i})=\RR^{|\C_i|}$. Otherwise, $\phib_{\pii}$ is linearly dependent on $\{\phib_{\pij}\}_{j\in\W_i}$ with probability 0 if for any $\W_i \subseteq \R-\{i\}$, $\Ncb_i \not\subset \C_i$ or $|\W_i|<|\C_i|$. Then, every $j \in \W_i$ also implies $j \in \C_i$ because of self-loops and the following holds: $|\C_i| \geq |\W_i|$. When $i \in \C_i$ because of edges from a node in $\W_i$, it makes $|\C_i| \geq |\W_i|+1$, thus implying that $\phib_{\pii}$ is linearly dependent on $\{\phib_{\pij}\}_{j\in\W_i}$ with probability 0. Otherwise, $i \not\in \C_i$ implies $\Ncb_i \not\subset \C_i$ thanks to the self-loop on $i$, thus also implying that $\phib_{\pii}$ is linearly dependent on $\{\phib_{\pij}\}_{j\in\W_i}$ with probability 0 and concluding the proof. As a remark, we notice that this proof would not be true with repeated nodes because the self-loop on $i$ could already be included in the supports of $\{\phib_{\pij}\}_{j\in\W_i}$ thus making the statement $|\C_i| \geq |\W_i|+1$ for the $i \in \C_i$ case false.

%Then, since $i\in\C_i$ and $j\in\C_i$ for all $j\in\W_i$ (because of the self loops), we obtain that $|\C_i|\ge|\W_i|+1$. This means that $\phib_{\pii}$ is linearly dependent with probability 0, which concludes the proof.
 
 %, but from \eqref{eq:dim_space} we can see that this condition cannot be satisfied. Then, we can say that $\phib_i$ is linearly independent, which completes the proof.
 \end{proof}
 
 \subsection{Uniqueness of solution with known support}
 
 We now provide the conditions enabling the proposed method to achieve perfect reconstruction when the signal support is known a priori.
 
 \begin{theorem}
 \label{thm:recovery}
 Given a graph $\G$ and an orthobasis $\Ub\in\CC^{n\times n}$, we consider a graph signal $\xb\in\RR^n$ having a $k$-sparse representation under the basis $\Ub$ with known support $\Sx$, i.e., $\xb = \Ub_{|\Sx}\hat{\xb}_\Sx$ and measurements $\yb = \Phib \xb = \Psib \hat{\xb} = \Psib_{|\Sx} \hat{\xb}_\Sx$. Then, perfect reconstruction is achieved with probability 1 when 
 $\gmin\ge k$.
 \end{theorem}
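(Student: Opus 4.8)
The plan is to reduce the statement to a rank condition on $\Psib_{|\Sx}$ and then to establish that condition almost surely through a determinant argument fueled by the multiplicity hypothesis. Since the support $\Sx$ is known (with $|\Sx|=k$), recovering $\xc$ from $\yb$ amounts to solving the linear system $\yb=\Psib_{|\Sx}\xc_\Sx$ in the $k$ unknowns gathered in $\xc_\Sx$. This system has a unique solution, and hence perfect reconstruction is achieved, if and only if the $m\times k$ matrix $\Psib_{|\Sx}=\Phib\Ub_{|\Sx}$ (with $m\ge k$) has full column rank $k$. So the entire proof reduces to showing that $\Psib_{|\Sx}$ has rank $k$ with probability $1$, which I would do by exhibiting $k$ of its rows whose $k\times k$ minor does not vanish almost surely.

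First I would pin down the target rows. Because $\Ub$ is an orthobasis, the $k$ columns of $\Ub_{|\Sx}$ are orthonormal, so $\Ub_{|\Sx}$ has rank $k$ and there is a set $J=\{j_1,\dots,j_k\}\subseteq\V$ of $k$ row indices for which the $k\times k$ submatrix $(\Ub_{|\Sx})_J$ is nonsingular. The hypothesis $\gmin\ge k$ guarantees that every node, and in particular each $j_l$, lies in the neighborhood $\Ncb_i$ of at least $k$ distinct sampling nodes. I would then match $j_1,\dots,j_k$ to distinct measurements greedily: when $j_l$ is processed, at most $l-1<k$ sampling nodes have already been committed while $j_l$ belongs to at least $k$ of them, so some sampling node $i_l$ with $j_l\in\Ncb_{i_l}$ is still free. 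This yields $k$ distinct measurements $i_1,\dots,i_k$ together with a system of distinct representatives $j_l\in\Ncb_{i_l}$ (equivalently, Hall's condition holds trivially, since even a single $j_l$ already has at least $k$ admissible measurements).

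Next I would carry out the determinant argument on the $k\times k$ submatrix $M$ of $\Psib_{|\Sx}$ formed by rows $i_1,\dots,i_k$. Expanding $\det M$ by multilinearity in the Gaussian entries $\varphi_{\pi(i_l)\,j}$, with $j$ ranging over $\Ncb_{i_l}$, produces a polynomial in the random coefficients whose monomials are indexed by tuples $(c_1,\dots,c_k)$ of distinct column indices satisfying $c_l\in\Ncb_{i_l}$, the coefficient of $\prod_{l}\varphi_{\pi(i_l)\,c_l}$ being the minor of $\Ub_{|\Sx}$ on rows $(c_1,\dots,c_k)$. Since the first index $\pi(i_l)$ tags the position $l$, distinct tuples give distinct monomials and no cancellation is possible; in particular the monomial associated with $J$ survives with nonzero coefficient $\pm\det\big((\Ub_{|\Sx})_J\big)\neq0$, so $\det M$ is a nonzero polynomial in the real variables $\varphi$. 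A nonzero polynomial vanishes only on a Lebesgue-null set, and the $\varphi$ follow continuous Gaussian laws, so $\det M\neq0$ and therefore $\Psib_{|\Sx}$ has rank $k$, with probability $1$.

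The hard part is the combinatorial middle step: converting the purely numerical condition $\gmin\ge k$ into a system of distinct representatives that matches specifically the ``good'' rows $J$ (those for which the $\Ub_{|\Sx}$ minor is nonzero) to $k$ distinct measurements. Once those representatives are secured, the remainder is routine, but two details deserve attention: checking that the target monomial is not cancelled by others in the expansion (which is precisely where the fact that each $\pi(i_l)$ labels its own factor is used), and observing that $\Ub$, hence $\det M$, is complex while the $\varphi$ are real, so the measure-zero conclusion should be applied to the real and imaginary parts of $\det M$, at least one of which is a nonzero real polynomial.
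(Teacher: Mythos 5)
Your proof is correct, but it follows a genuinely different route from the paper's. The paper works column-wise: it first argues that distinct columns $\psib_i,\psib_j$ of $\Psib_{|\Sx}$ are uncorrelated (via $\Ed[\Phib^T\Phib]=\Ib$ and orthogonality of $\Ub$) and, being jointly Gaussian, independent; it then invokes a support-covering characterization of when a random column can be linearly dependent on the others (a subset $\C_i\subseteq\Sx\setminus\{i\}$ must satisfy $|\C_i|\ge|\Sx_{\C_i}|\supseteq$-cover $\Sx_i$), and closes with the chain $k-1\ge|\C_i|\ge|\Sx_{\C_i}|\ge|\Sx_i|\ge \gmin\ge k$, a contradiction. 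You instead work row-wise: you extract a nonsingular $k\times k$ minor of $\Ub_{|\Sx}$, use $\gmin\ge k$ to build a system of distinct representatives matching those $k$ rows to $k$ distinct measurements, and show the corresponding $k\times k$ minor of $\Psib_{|\Sx}$ is a not-identically-zero polynomial in the independent Gaussian entries (the monomial indexed by the matching survives with coefficient $\pm\det((\Ub_{|\Sx})_J)$), hence nonzero almost surely. Your version is more elementary and self-contained: it does not rely on the paper's step from the scalar condition $\Ed[\psib_j^T\psib_i]=0$ to independence of the column vectors, nor on the unproved ``linearly dependent with probability 1 if and only if'' characterization, and it handles the complex-versus-real issue explicitly. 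What the paper's argument buys in exchange is a more direct link between $\gmin$ and the column supports $|\Sx_i|\ge\gmin$, mirroring the structure of its Theorem 2 on row rank. One cosmetic caution in your write-up: when the sampling set contains repeated nodes, $\gmin\ge k$ guarantees that each $j_l$ appears in at least $k$ distinct \emph{measurements}' neighborhoods, not necessarily $k$ distinct sampling nodes; your greedy matching should be phrased over measurement indices, which is what your conclusion already uses, so the argument stands.
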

 \begin{proof}
 In order to show that we can achieve perfect reconstruction, we have to prove that, under the conditions described in the statement of the theorem, the matrix $\Psib_{|\Sx}$ is full rank $\forall\,\Sx$ such that $|\Sx|=k$. Therefore, the $k$ columns of $\Psib_{|\Sx}$ should be linearly independent. Let $\psib_i$ be the $i$-th column of $\Psib_{|\Sx}$, %then $\psib_i$ is linearly dependent if %$\psib_i=\sum_{j=1}^{k-1}\alpha_j\psib_j$, i.e. 
 %$\mbox{span}(\{\psib_{j}\}_{1\le j\le k-1})=\mbox{span}(\{\psib_{j}\}_{1\le j\le k})$. 
 given any $\psib_j$ with $j\neq i$, we point out that
 \[
 \Ed[\psib_j^T\psib_i]=\Ed[\ub_j^T\Phib^T\Phib\ub_i]=\ub_j^T\Ed[\Phib^T\Phib]\ub_i=\ub_j^T\ub_i=0,
 \]
 where $\ub_i$ is the $i$-th column of $\Ub$. Thus, we can say that $\psib_j$ and $\psib_i$, with $i\neq j$, are uncorrelated. We also recall that the entries of $\psib_i$, with $1\le i\le n$, are linear combinations of Gaussian random variables, then $\psib_i$ is a random vector with Gaussian entries. Moreover, it is easy to see that any linear combinations of $\psib_i$ and $\psib_j$ is still a random vector with Gaussian entries. Therefore, we can say that the entries of $\psib_i$ and $\psib_j$ are jointly Gaussian distributed \cite{Gut_probability}. Then, since $\psib_i$ and $\psib_j$ are uncorrelated and their entries are jointly Gaussian distributed, from Theorem 4.5 in \cite{hogg2014probability}  we can say that the entries of $\psib_i$ and $\psib_j$ are independent.
 
 Let $\Sx_i$ be the support of $\psib_i$. Then, since we have shown that $\forall\, j\in\Sx$ with $j\neq i$ $\psib_i$ and $\psib_j$ are pairwise independent random vectors whose entries are drawn from a continuous distribution, $\psib_i$ is linearly dependent with probability 1 if and only if there exists a subset $\C_i\subseteq\Sx-\{i\}$ with $\Sx_{\C_i}=\bigcup_{j\in\C_i}\Sx_j$ such that $\Sx_i\subseteq\Sx_{\C_i}$ and $|\C_i|\ge|\Sx_{\C_i}|$, i.e., $\mbox{span}(\{\psib_j\}_{j\in\C_i})=\RR^{|\Sx_{\C_i}|}$. Otherwise, $\psib_i$ is linearly dependent with probability 0 if for any subset $\C_i\subseteq\Sx-\{i\}$ with $\Sx_{\C_i}=\bigcup_{j\in\C_i}\Sx_j$, $\Sx_i \not\subset \Sx_{\C_i}$ or $|\C_i|<|\Sx_{\C_i}|$. Then, we can observe that
 \[
 k-1\ge|\C_i|\ge|\Sx_{\C_i}|\ge|\Sx_i|.
 \]
 In order to compute $|\Sx_i|$, we recall that $\psib_i=\Phib\ub_i$. Since $\ub_i$ is an orthonormal vector with respect to $\ub_j$, we can say that there exists at least one nonzero entry $l$ of the vector $\ub_i$ and the corresponding $l$-th column of $\Phib$ has $g_l$ nonzero entries by construction. Therefore, we obtain that $|\Sx_i|\ge g_l\ge\gmin$, which completes the proof.
 
 %Since we have shown the pairwise independence of the columns of $\Psib$,  $\psib_i$ is linearly dependent if and only if the dimension of the space  %we have that $\psib_i$ is linearly dependent if and only if $\mbox{dim}(\mbox{span}(\{\psib_{j}\}_{1\le j\le k-1}))=\sigma_{\Sx_i}$,
 \end{proof}
 
 \subsection{Discussion}
As we have shown in Section \ref{sec:rip}, if we consider the second sampling strategy presented in Section \ref{sec:samp_strat}, we have the theoretical guarantee that new measurements are linearly independent from the previous ones and thus always informative. Instead, the strategy presented in Section \ref{sec:samp_strat} that repeats dominating nodes can only have this guarantee if at every insertion the rank of the resulting matrix is checked to avoid the presence of subgraphs with linearly dependent measurements. %In addition, it is worth noting that, since $\Phib$ is full rank with both the proposed sampling methods, we can achieve perfect reconstruction with an optimal number of measurements (i.e. $m=k$) when the signal support $\Sx$ is known.

It is also important to underline the fundamental role of the minimum node multiplicity $\gmin$ in Theorem \ref{thm:main}. As we can see from Definition \ref{def:gmin}, the parameter $\gmin$ represents the minimum number of times a node of the graph appears in the measurements and it is subject to the lower bounds in Theorem \ref{thm:main} and \ref{thm:recovery} in order to satisfy the RIP and perfect reconstruction conditions. For these reasons, the proposed sampling strategies aim to find the sampling node that provides the fastest increase of $\gmin$, as shown in Eq. \eqref{eq:node_insert}. We also point out that our method is not affected if the graph is disconnected. In fact, the sampling strategy based on $\gmin$ guarantees that all the connected components are fairly sampled.

Furthermore, we point out that Theorem \ref{thm:recovery} provides a sufficient condition to guarantee perfect reconstruction. However, in some specific cases the bound might be not very tight and it could be impossible to satisfy the condition  $\gmin\ge k$. In such cases, Theorem \ref{thm:recovery} becomes uninformative. For example, if we consider a disconnected graph, it may happen that $\gmin$ is always lower than $k$, even if we increase the number of measurements. In this specific case, we can easily obtain a more informative lower bound by considering each connected component separately. Therefore, even if the signal is localized in just a few components, we can still reconstruct the signal given a sufficiently large number of measurements.

It is also interesting to highlight the connection between the theoretical results presented in this section and some well-known results of compressed sensing. In particular if we consider a fully connected graph, we have that $\Phib$ is a dense matrix and $\gmin$ is equal to $m$; then the condition in Theorem 3 becomes $m\ge k$, which is a known result for compressed sensing. Instead, if we consider a disconnected graph where each component has the same number of nodes and is fully connected, it can be shown that $\Phib$ is a random block-diagonal matrix and the RIP property presented in Theorem \ref{thm:main} becomes equal to the result presented in \cite{eftekhari2015restricted}.

Moreover, it is worth noting that, if we consider a dense sparsity basis $\Ub$, the condition of Theorem \ref{thm:recovery} on $\gmin$ can be replaced by requiring $m\ge k$, because in this case we have the guarantee that $\Psib$ is dense.

Another important parameter that appears in Theorem \ref{thm:main} is the graph-basis coherence $\mu(\Ub)$. It can be easily verified that
\[
\sqrt{\frac{\Ncb^*}{n}}\le\mu(\Ub)\le 1.
\]
If we consider an undirected cycle graph and take as orthonormal basis $\Ub$ its GFT basis (which in this case corresponds to the Fourier basis), it can be shown that $\mu(\Ub)=\sqrt{\frac{2}{n}}$ \cite{eftekhari2015restricted}.

%\begin{itemize}
%\item rank
%\item cosa e' il parametro del mistero gmin e cosa fa
%\item la coherence                                                                                                                                                                                                                                                         
%\end{itemize}

\section{Experiments}
\label{sec:experiments}
%\begin{itemize}
%\item   Noiseless, known support. Multiple graphs, multiple $m$. Compare with 1) random sampling, 2) random weighted sampling, 3) successive local aggregations, 4) Di Lorenzo?
%\item Noisy, know support
%\item Phase transition / MSE with l1 min on various graphs
%\item Dominating set size vs no. of hops
%\item no. of communications with respect to successive local aggregations?
%\end{itemize}
%Graphs:
%\begin{itemize}
%\item Erdos
%\item Ring
%\item Minnesota
%\item Small world
%\end{itemize}
%Figures:
%\begin{itemize}
%\item Dominating set size vs no. of hops on Minnesota
%\item Table no. of communications with respect to successive local aggregations (fix optimal and equal m, noiseless, known supp) on Minnesota, Erods, Ring, Small world
%\item Noiseless, known support. 
%\end{itemize}

This section compares the performance of the proposed method with some other sampling methods for signals defined over graphs present in the literature.

\subsection{Experimental setup}

We consider a signal that is $k$-sparse in the GFT domain, defined as the eigenvectors of the normalized graph Laplacian $\Lb$.
Depending on the particular experiment, the frequency support can either be the first $k$ coefficients (bandlimited model) or $k$ randomly chosen frequencies (random model).
Tests are performed using the following graphs as implemented in the Matlab Graph Signal Processing Toolbox \cite{perraudin2014gspbox}: Minnesota, Sensor, Community. The dominating set is computed using a greedy algorithm that at every step adds the node with the largest degree without any neighbors already in the dominating set. 

We perform sampling and reconstruction experiments with both known and unknown signal support. When the support is known, we compare the proposed method with uniform random sampling \cite{Puy_random}, weighted random sampling \cite{Puy_random}, the MinPinv \cite{Tsitsvero_uncertainty} greedy approximation to the optimal sampling and reconstruction problem and the method based on the localization operator \cite{sakiyama_2017} (using a Chebyshev polynomial approximation of order 10 for the kernel). Notice that weighted random sampling and MinPinv require knowledge of the signal support, so we only test uniform random sampling and the localization operator when the support is unknown.
Also notice that with respect to the other techniques, the proposed method with the repeated dominating nodes does not reduce locality when increasing the number of measurements: only $\vert \D^p \vert \leq n$ nodes are aggregation points. Also, the hop parameter $p$ decreases as $m$ increases. All the results are averaged over 1000 runs.
We also provide results about the running time of the sampling procedure (without signal reconstruction) for each method.

Finally, we do not show any reconstruction results for the successive local aggregations method of \cite{Marques_successive}. This method suffers from numerical instability even for moderately-sized graphs so that the reconstruction error quickly explodes for the sizes used in our experiments. Table \ref{table:condition_Marques} highlights this fact by reporting the high value of the condition number of the matrix $\Psib$ obtained with such method, for the Erd\H{o}s-R\'{e}nyi graph with edge probability $p_e$. As mentioned in Section \ref{sec:bkg_related} this is due to the poor conditioning of the Vandermonde matrix constructed by the sampling procedure.

\begin{table}[t]
    \centering
    \caption{Condition number of sampling matrix Erdos-Renyi graph, $n=100$, $k=10$}
    \label{table:condition_Marques}
    \begin{tabular}{c|c|c|c|c|c|c|c}
    \multirow{2}{*}{$\mathbf{p_e}$} & \multirow{2}{*}{\bf{Method}} & \multicolumn{6}{|c}{\bf{Measurements}} \\
    & & 10 & 20 & 30 & 40 & 50 & 100 \\
    \hline
    & & & & & & & \\[-6pt]
    \multirow{2}{*}{$0.2$} & \cite{Marques_successive} & $10^{8}$ & $10^{17}$ & $10^{19}$ & $10^{20}$ & $10^{22}$ & $10^{28}$ \\
    & Proposed & $197.2$ & $3.8$ & $2.5$ & $2.2$ & $2.1$ & $1.6$\\
    \hline
    \multirow{2}{*}{$0.5$} & \cite{Marques_successive} & $10^{10}$ & $10^{17}$ & $10^{18}$ & $10^{19}$ & $10^{19}$ & $10^{25}$ \\
    & Proposed & $36.7$ & $4.5$ & $3.1$ & $2.3$ & $2.0$ & $1.9$\\
    \hline
    \multirow{2}{*}{$0.8$} & \cite{Marques_successive} & $10^{12}$ & $10^{17}$ & $10^{17}$ & $10^{17}$ & $10^{18}$ & $10^{21}$ \\
    & Proposed & $42.6$ & $3.6$ & $3.0$ & $2.5$ & $2.3$ & $1.5$
    \end{tabular}
\end{table}

\subsection{System behavior}

We show the size of the $p$-hop dominating set as returned by the greedy algorithm for the Minnesota and Sensor graphs. From Fig. \ref{fig:dominating} it can be noticed that the Minnesota graph has a particularly large dominating set due to the low degree of many nodes. However, the dominating set quickly shrinks once multiple hops are considered.

\begin{figure}[t]
\centering
\subfigure[Minnesota]{
\includegraphics[width=0.48\columnwidth]{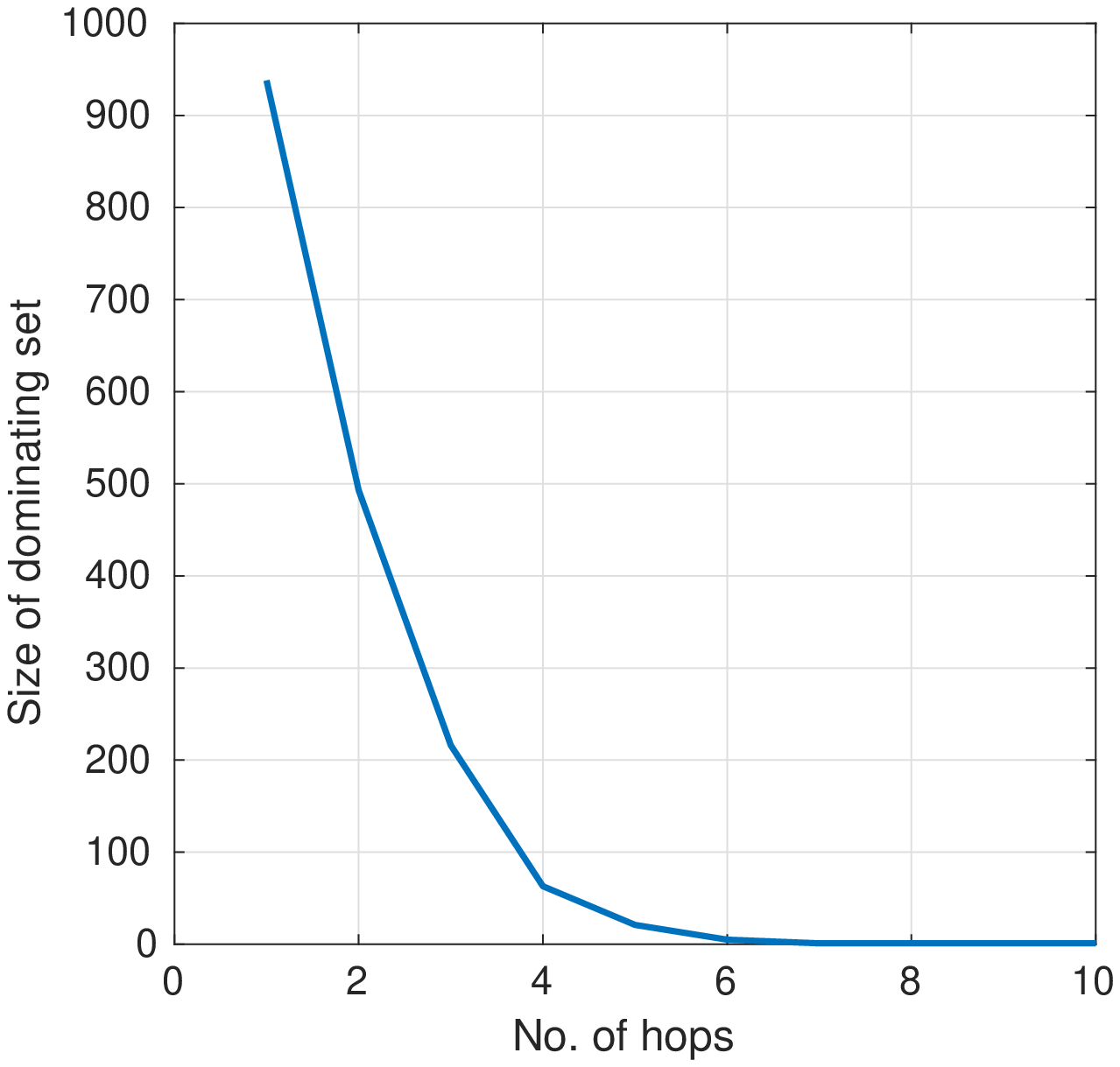}
}\hspace*{-0.1cm}
\subfigure[Sensor]{
\includegraphics[width=0.47\columnwidth]{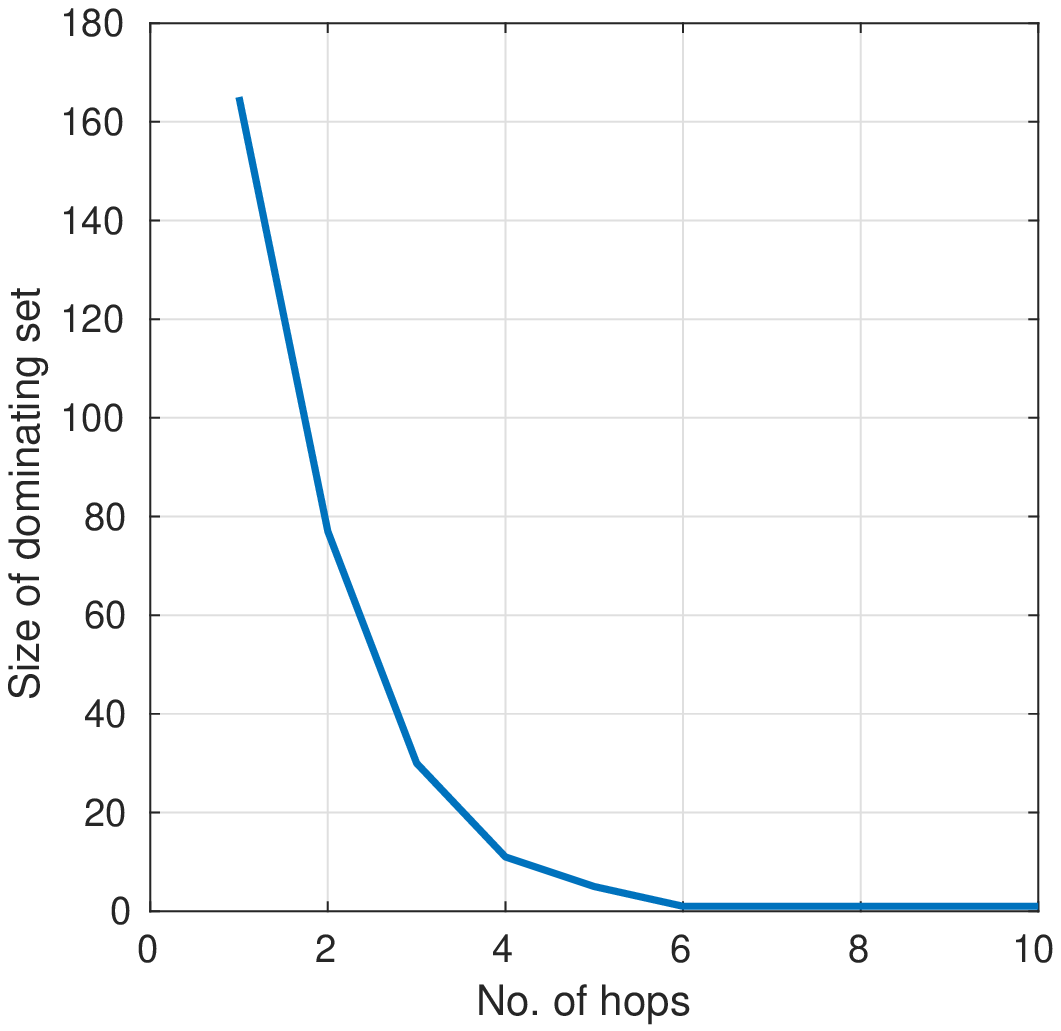}
}
\caption{Size of dominating set as function of number of hops used to consider two nodes a neighbors ($n=2642$ for Minnesota, $n=1000$ for Sensor).}
\label{fig:dominating}
\end{figure}

%The proposed method performs localized aggregations which require transmission of signal coefficients between neighbors. Table \ref{table:tx} shows an upper bound to the total number of transmissions required to sample the signal and compares it with the method in \cite{Marques_successive}. It can be noticed that the proposed method compares favourably with respect to \cite{Marques_successive}, reducing the total number of transmissions by up to an order of magnitude. This is due to the fact that \cite{Marques_successive} requires $m$ successive applications of the graph shift operator which result in moving all the signal coefficients (or partial linear combinations thereof) around in the network. On the contrary, the proposed method is more localized as each coefficient does not move more than $p \ll m$ hops away from the original node.

%\begin{table}[t]
%    \centering
%    \caption{Number of transmissions}
%    \label{table:tx}
%    \begin{tabular}{c|c|c|c}
%    \multirow{2}{*}{\bf{Graph}} & \multicolumn{2}{|c|}{\bf{Proposed}} & \multirow{2}{*}{\bf{Marques et al. \cite{Marques_successive}}}\\
%    & Repetition & Insertion & \\ 
%    \hline
%    & & & \\[-6pt]
%    Minnesota & \multirow{2}{*}{\bf{29,526}} & \multirow{2}{*}{74,327} & \multirow{2}{*}{330,400} \\
%    ($n=2642$, $m=50$) & & & \\[3pt]
%    Sensor & \multirow{2}{*}{\bf{4,917}} & \multirow{2}{*}{8,451} & \multirow{2}{*}{354,400} \\
%    ($n=1000$, $m=50$) & & &
%    \end{tabular}
%\end{table} 

\subsection{Reconstruction: known support}

In this experiment, the signal support in the frequency domain is known. However, the true sparse signal is corrupted by additive white Gaussian noise with standard deviation $\sigma$, i.e., measurements are computed as $\yb = \Phib (\xb + \nb)$. Least-squares reconstruction is performed by computing the pseudoinverse of the product between the sampling operator and the basis, restricted to the the columns in the support:
\begin{align*}
\hat{\xb}_\Sx^* &= \left( \Psib_{\vert\Sx} \right)^+ \yb \\
\xb^* &= \Ub \hat{\xb}^*
\end{align*}
Finally, we estimate the mean squared error (MSE) between the reconstructed signal $\xb^*$ and the original noiseless signal $\xb$ as:
\begin{align*}
\text{MSE} = 10\log_{10}\left( n^{-1} \Vert \xb^* - \xb \Vert^2 \right) .
\end{align*}

Fig. \ref{fig:supp_sensor_community} shows the reconstruction MSE for the Sensor and Community graphs, as function of the number of measurements for a fixed noise standard deviation. We notice that the MinPinv algorithm \cite{Tsitsvero_uncertainty} performs best, as expected since it approximates the optimal solution to the sampling problem which would minimize the reconstruction MSE. Its drawback is that it is computationally expensive and not very flexible, as it can only be applied if the support is known. The proposed method is typically close to MinPinv, especially for a lower number of measurements where it outperforms methods based on uniform or weighted random sampling and it is on par with or slightly better than the approach using the localization operator. We notice that the proposed method has a different error floor with respect to other methods for a large number of measurements due to the construction of the sampling matrix, having a condition number slightly larger than 1. However, it is not particularly interesting as this floor occurs for low values of MSE. We also notice that the proposed method works well even with the random model, when uniform random sampling and the localization operator exhibit poor performance due to their implicit assumptions on bandlimited signals.

\begin{figure}[t]
\centering
\subfigure[Sensor]{
\includegraphics[width=0.49\columnwidth]{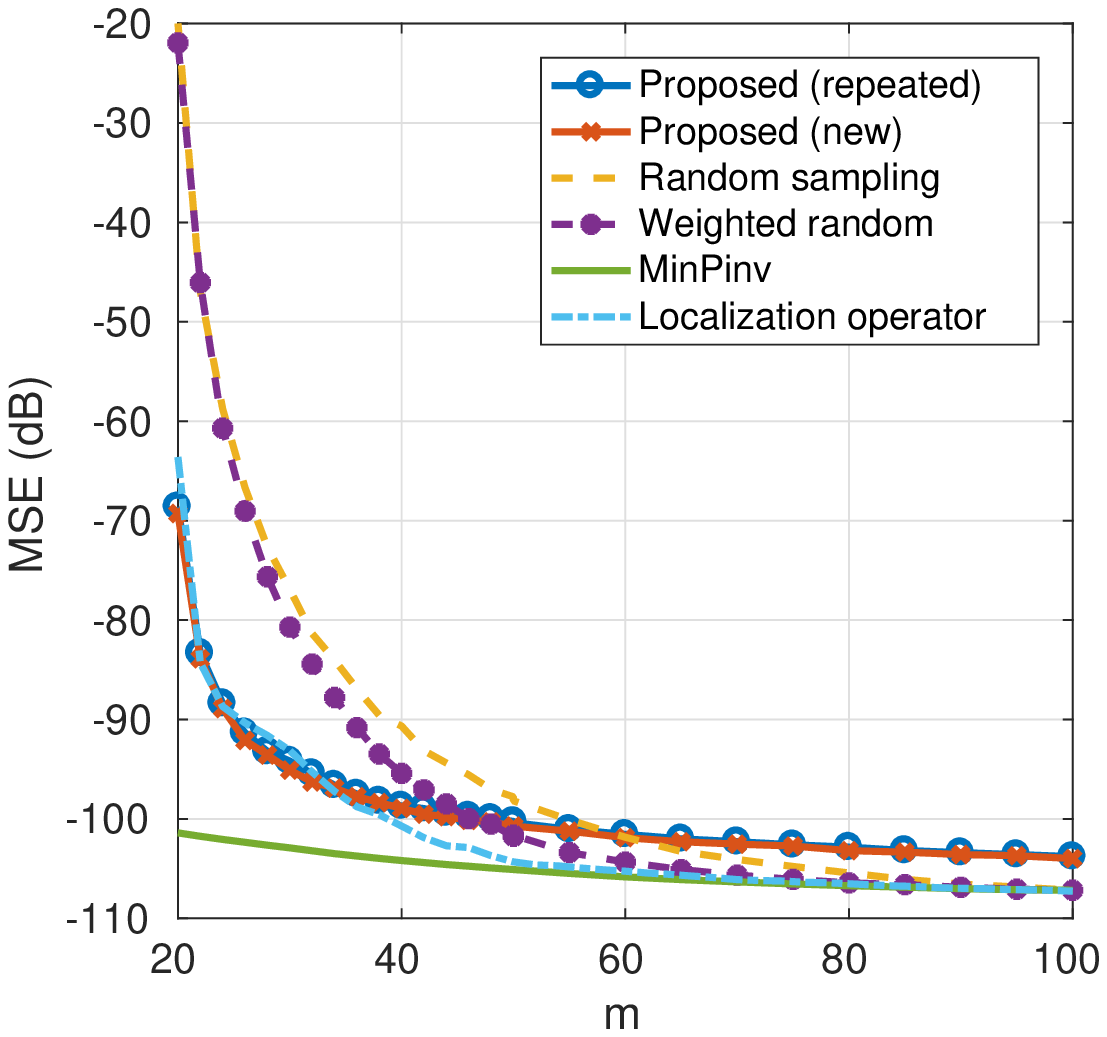}
}\hspace*{-0.1cm}
\subfigure[Community]{
\includegraphics[width=0.47\columnwidth]{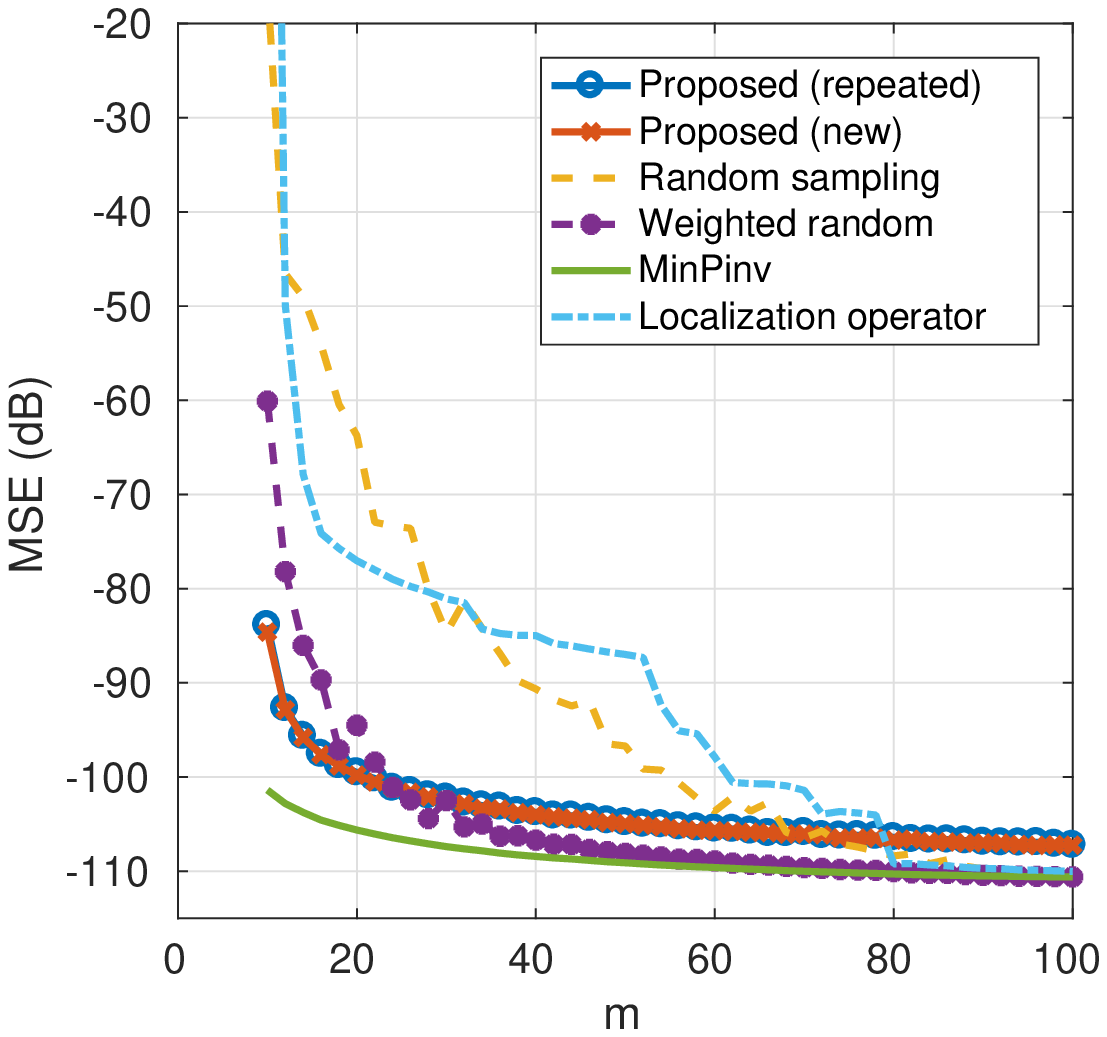}
}
\caption{Reconstruction MSE as function of number of measurements. Sensor graph, $n=100$, $k=20$, bandlimited model. Community graph, $n=100$, $k=10$, random model. Noise standard deviation $\sigma=10^{-5}$.}
\label{fig:supp_sensor_community}
\end{figure}

Fig. \ref{fig:supp_minnesota} reports the results for the Minnesota graph. It is important to notice that the proposed method is stable in presence of noise, as confirmed by Fig. \ref{fig:supp_minnesota_vs_noise} where the slope matches the ones of the quasi-optimal MinPinv. We also observe that the two node selection strategies are fairly equivalent. Finally, we note that all the tests achieved perfect reconstruction with $m=k$ when no noise was added.

\begin{figure}[t]
\centering
\subfigure[MSE vs. $m$]{
\includegraphics[width=0.49\columnwidth]{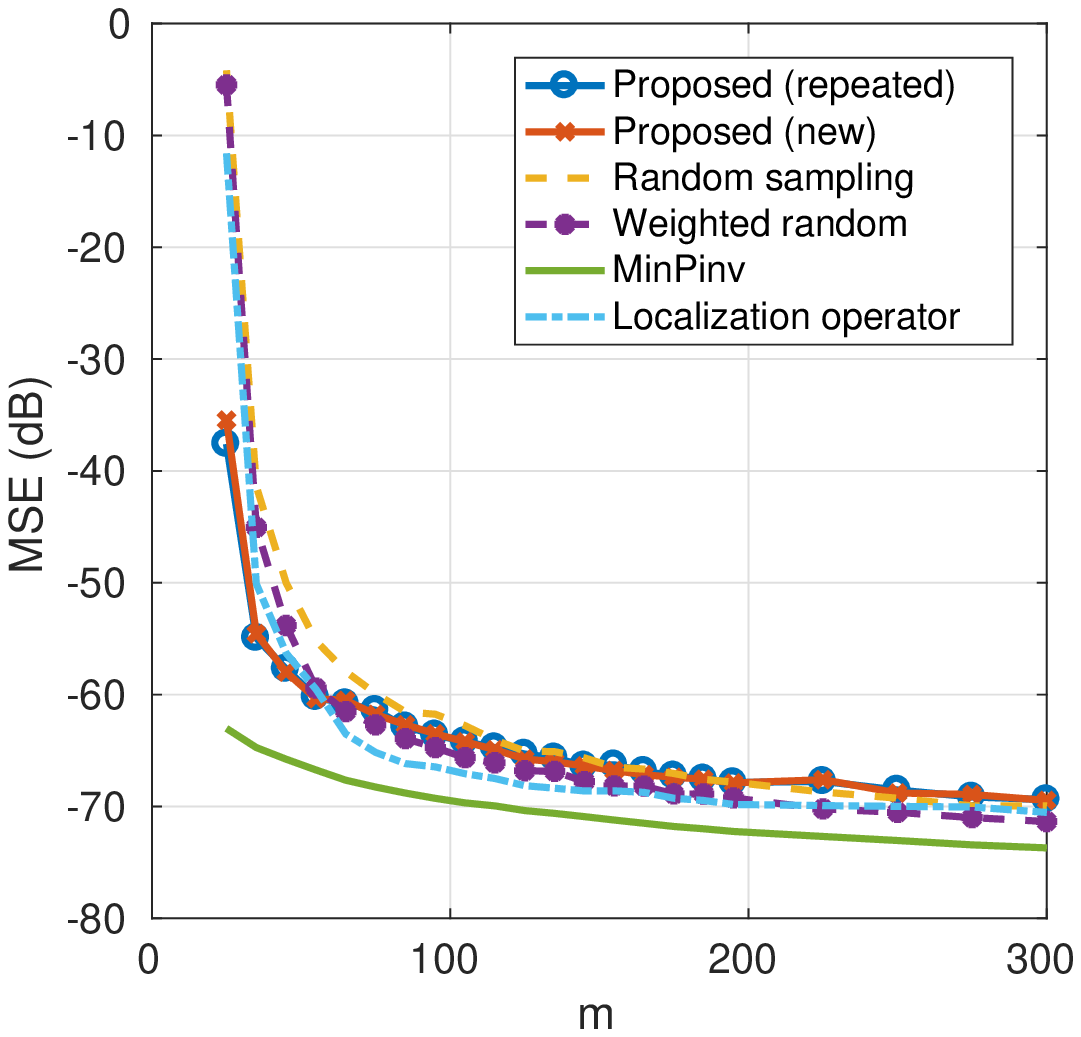}
}\hspace*{-0.1cm}
\subfigure[MSE vs. $\sigma$]{
\includegraphics[width=0.495\columnwidth]{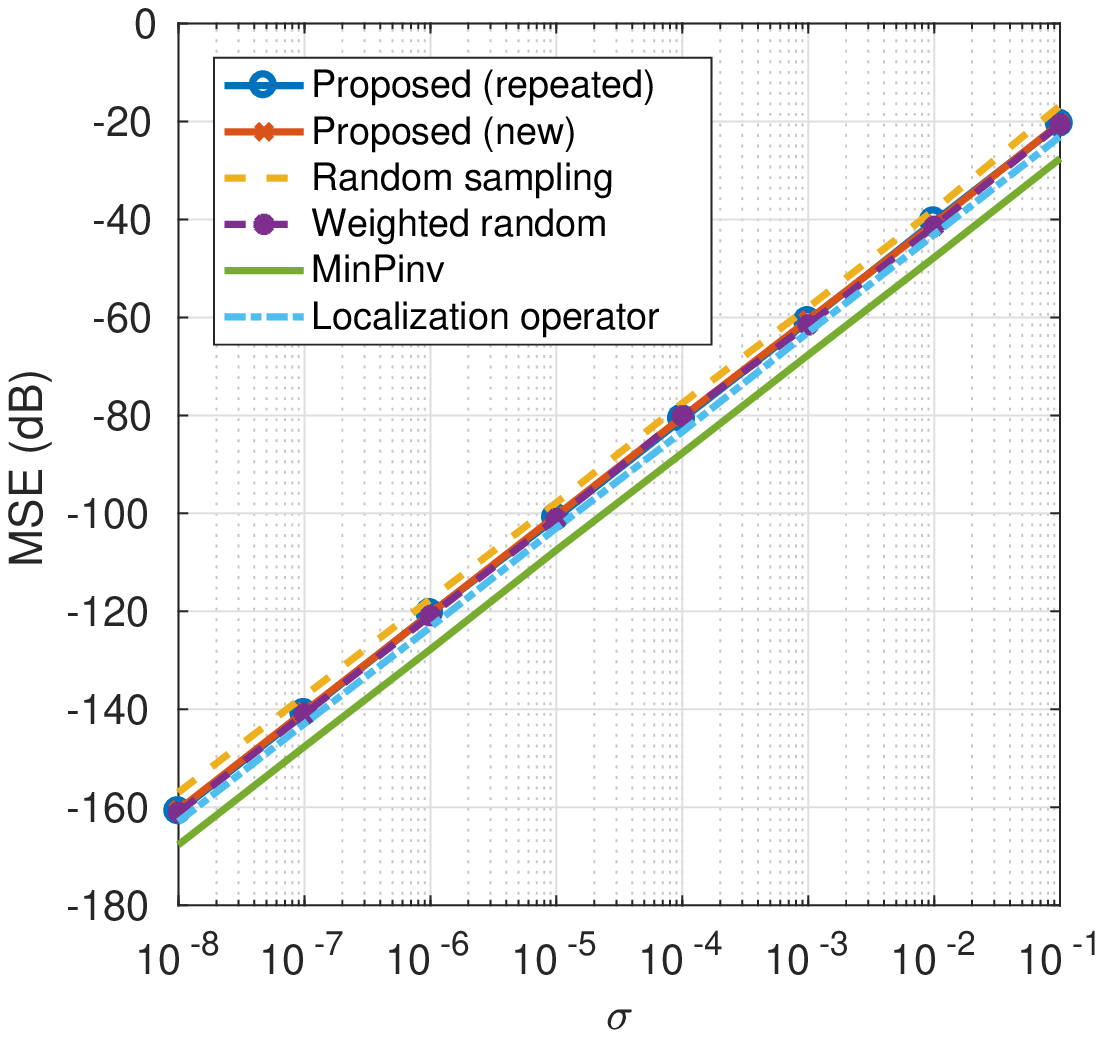}
\label{fig:supp_minnesota_vs_noise}
}
\caption{Reconstruction MSE as function of number of measurements $m$ and noise standard deviation $\sigma$. Minnesota graph, $n=2642$, $k=25$, bandlimited model. (a) $\sigma=10^{-3}$, (b) $m=65$.}
\label{fig:supp_minnesota}
\end{figure}

\subsection{Reconstruction: unknown support}

This experiment reconstructs signals with unknown frequency support by using $\ell_1$ minimization, as defined in Eq. \eqref{eq:l1min}. No noise is added in this experiment and we measure the probability of perfect reconstruction, defined as a MSE lower than -40 dB, as a function of the number of measurements. 

Fig. \ref{fig:nosupp_minnesota_sensor} shows the probability of perfect reconstruction for the Minnesota and Sensor graphs. It can be noticed that the phase transition of the proposed method occurs at a lower number of measurements with respect to uniform random sampling. Fig. \ref{fig:nosupp_community} reports the same experiment on the community graph. Notice how random sampling performs well for a bandlimited signal but degrades with respect to the proposed method when the support is chosen at random. The method based on the localization operator performs worse on this graph both under the bandlimited and random models. 

Finally, we test the effect of the graph-basis coherence on the reconstruction performance. Theorem \ref{thm:main} predicts that a higher value of $\mu$ requires a higher $g_\text{min}$ which means a higher number of measurements for a given graph. We constructed an experiment with two graphs with the same number of nodes ($n=100$) and same size of the dominating set ($\vert \D \vert=50$), but with drastically different values of the graph-basis coherence with the GFT basis. The first graph is a 2D grid, showing a low coherence, while the second is a small-world graph with high coherence. Fig. \ref{fig:coherence} shows that the phase transition occurs for a larger number of measurements when the coherence is high. We also test the impact of a weighted matrix used to compute the GFT basis. In particular, we draw a random geometric graph with a fixed number of nodes ($n=100$) on the unit 2D square and an edge is placed if the Euclidean distance between the nodes is below a certain threshold (equal to 0.2). We then repeat the same experiment as before of reconstruction with unknown support, comparing the probability of perfect reconstruction with the GFT as sparsity basis, computed from the normalized Laplacian of the binary adjacency matrix or from the weighted adjacency matrix where the weights of the nonzero entries are $w_{ij}=e^{-d_E(i,j)}$, being $d_E$ the Euclidean distance between two nodes in the 2D space. Fig. \ref{fig:weighted_gft} shows that there is no substantial difference in performance between the two bases.

\begin{figure}[t]
\centering
\subfigure[Minnesota]{
\includegraphics[width=0.48\columnwidth]{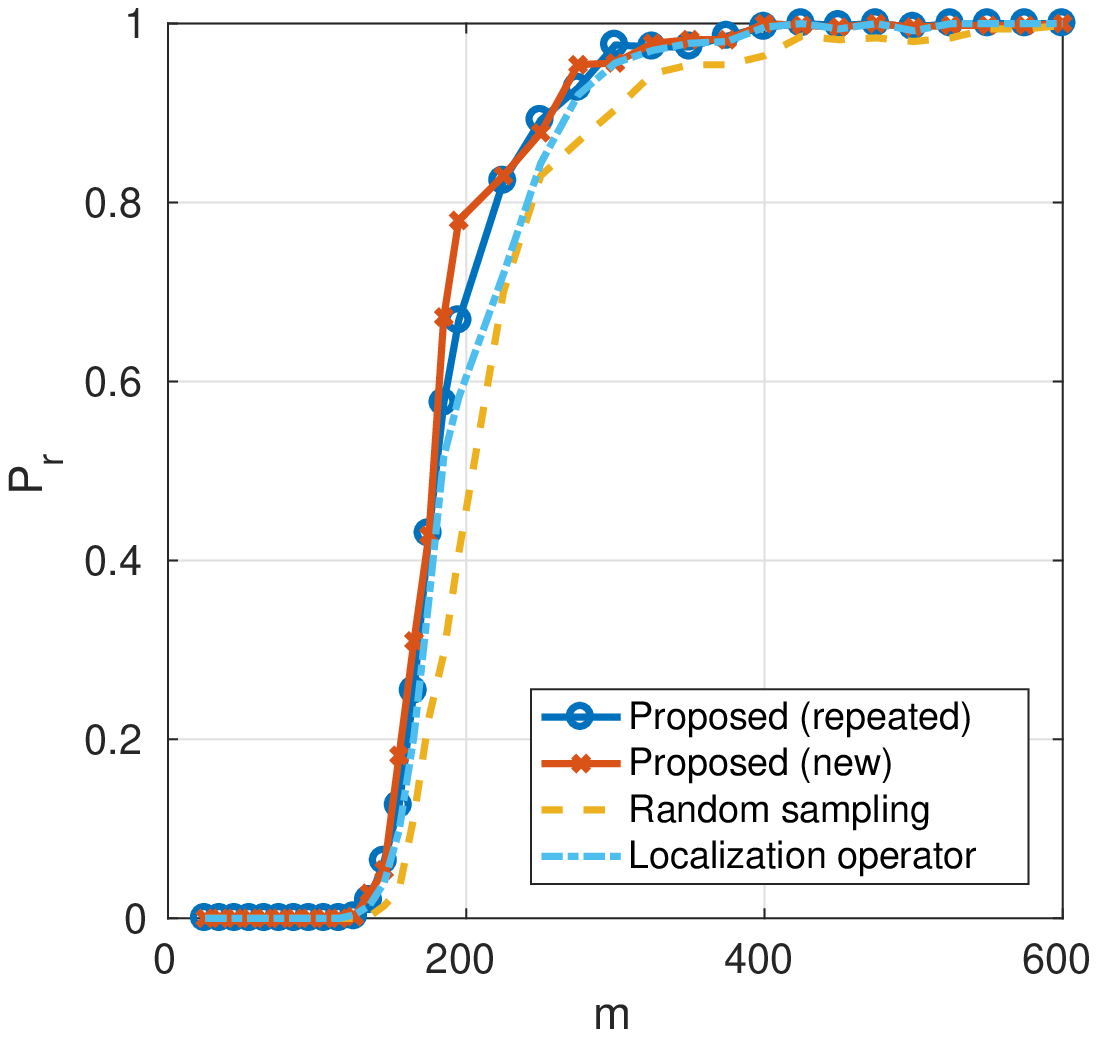}
}\hspace*{-0.1cm}
\subfigure[Sensor]{
\includegraphics[width=0.48\columnwidth]{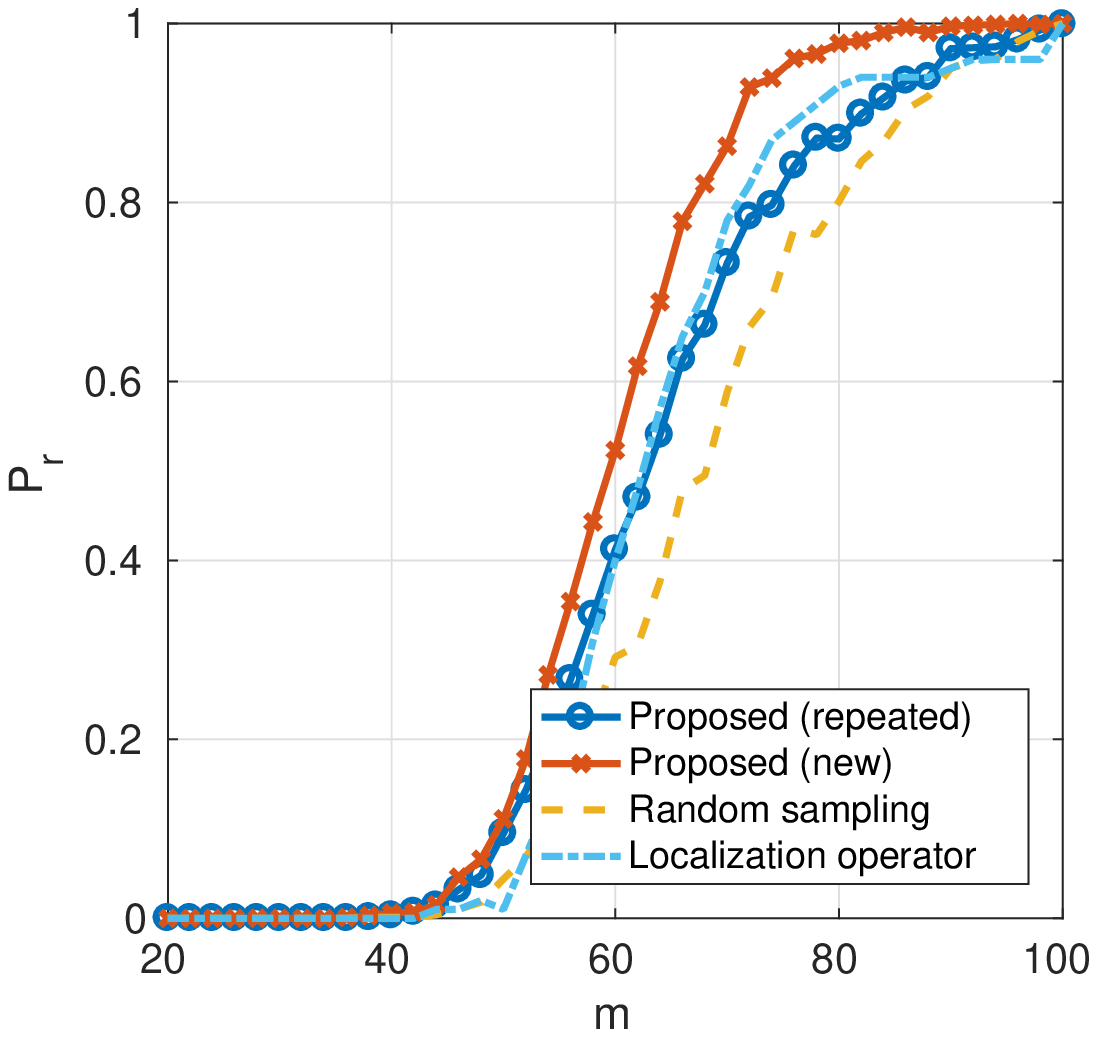}
}
\caption{Support unknown: probability of perfect recovery as function of number of measurements $m$. Minnesota graph, $n=2642$, $k=25$, bandlimited model. Sensor graph, $n=100$, $k=20$, random model.}
\label{fig:nosupp_minnesota_sensor}
\end{figure}

\begin{figure}[t]
\centering
\subfigure[Community (bandlimited)]{
\includegraphics[width=0.48\columnwidth]{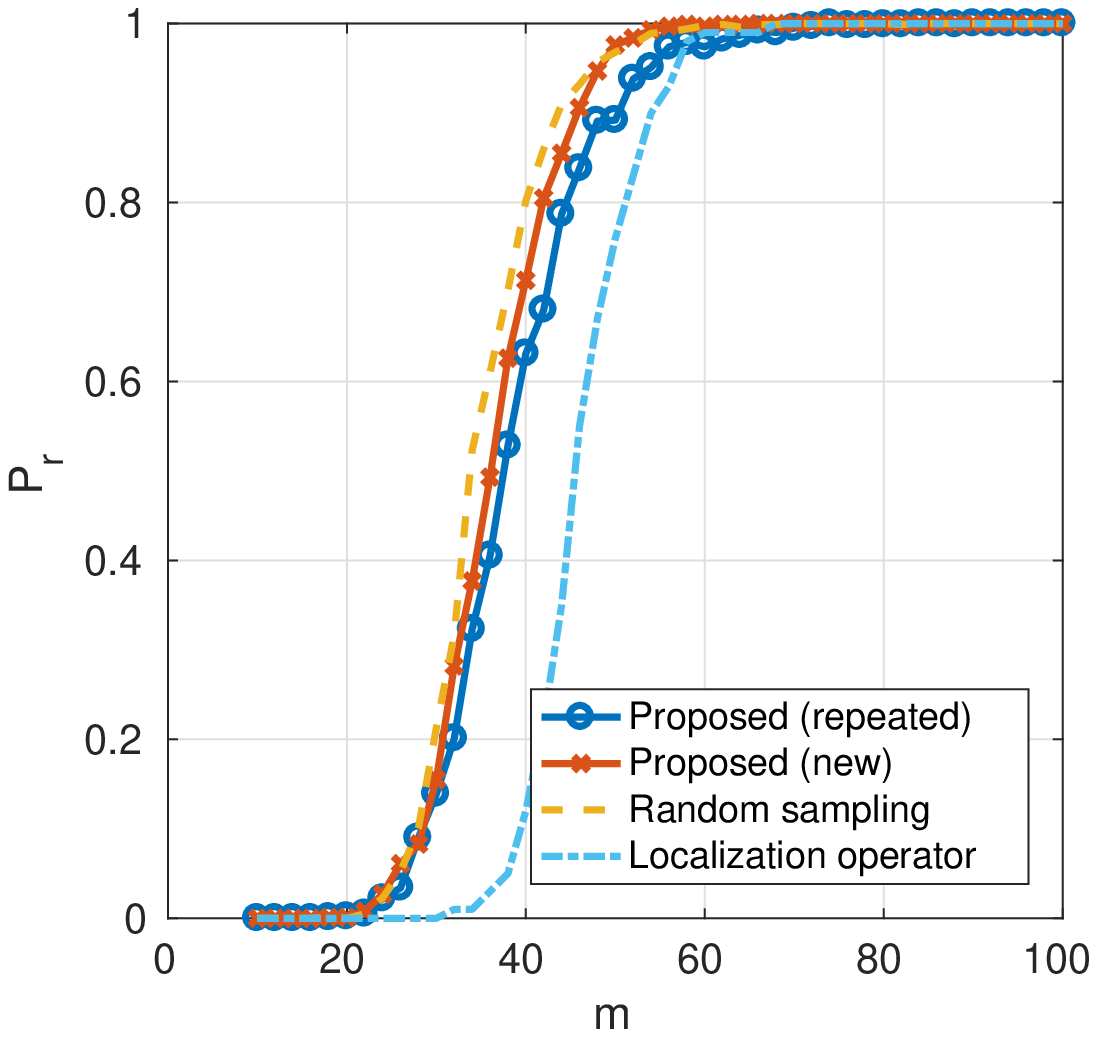}
}\hspace*{-0.1cm}
\subfigure[Community (random)]{
\includegraphics[width=0.48\columnwidth]{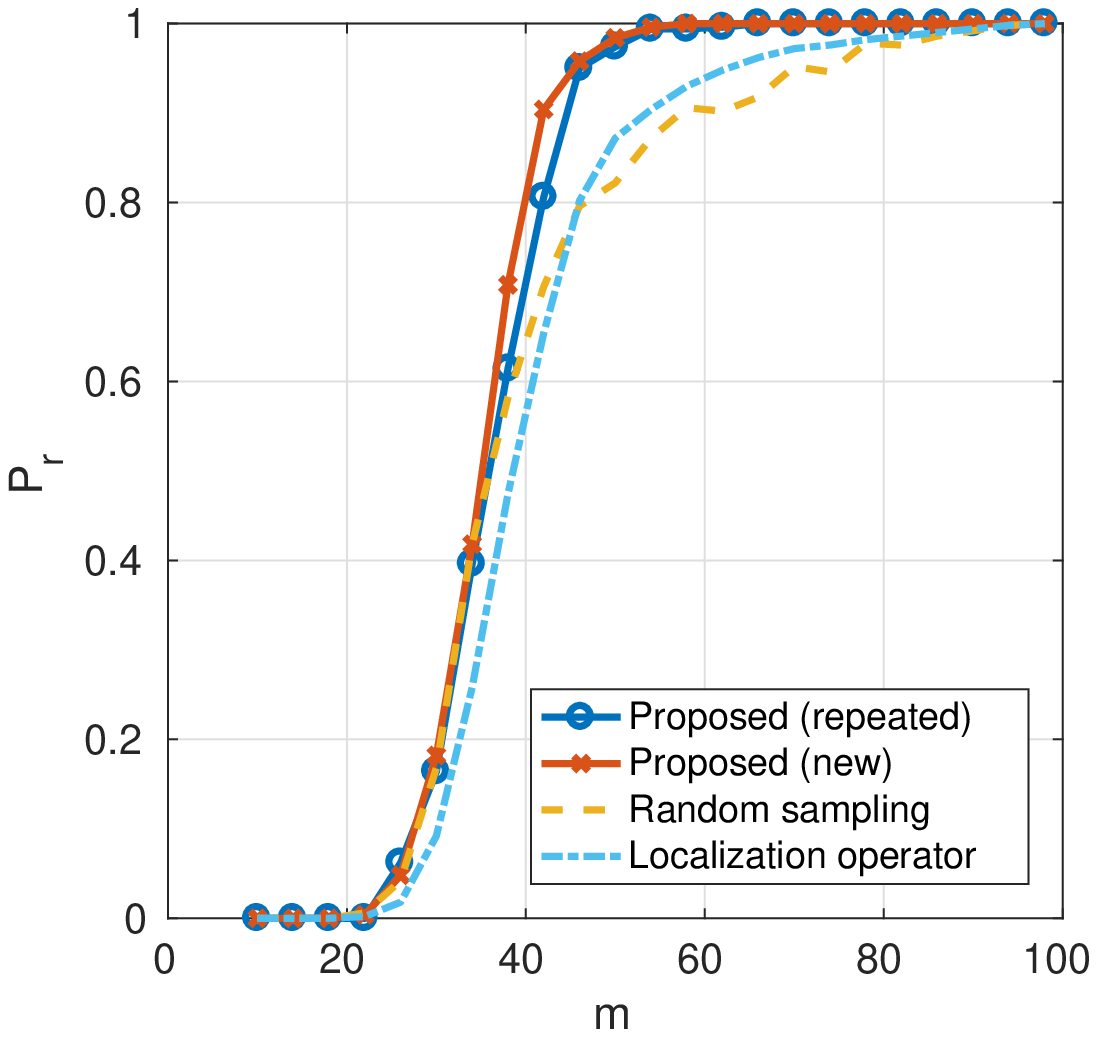}
}
\caption{Support unknown: probability of perfect recovery as function of number of measurements $m$. Community graph, $n=100$, $k=10$, (a) bandlimited model, (b) random model. }
\label{fig:nosupp_community}
\end{figure}

\begin{figure}[t]
\centering
\includegraphics[width=0.55\columnwidth]{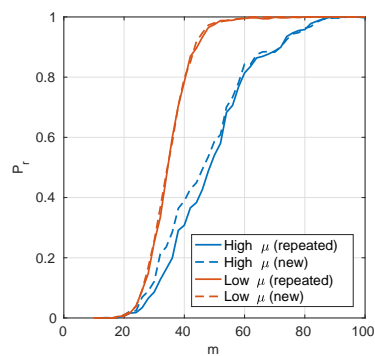}
\caption{Support unknown: probability of perfect recovery as function of number of measurements $m$. $n=100$, $k=10$, bandlimited model. Low $\mu$: 2D grid graph ($\mu = 0.68$ at $m=60$), High $\mu$: small-world graph ($\mu = 1$ at $m=60$)}
\label{fig:coherence}
\end{figure}

\begin{figure}[t]
\centering
\includegraphics[width=0.55\columnwidth]{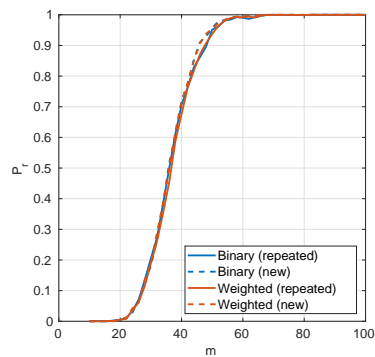}
\caption{Support unknown: probability of perfect recovery as function of number of measurements $m$. $n=100$, $k=10$, bandlimited model. Random geometric graph in the 2D square with Euclidean distance threshold 0.2. Basis is GFT from binary weights or exponential weights.}
\label{fig:weighted_gft}
\end{figure}

\subsection{Computational complexity}
Table \ref{table:complexity} reports the running time for all the tested methods (except uniform random sampling, whose complexity is trivial) on the community graph with $n=2000$ nodes, a signal with sparsity $k=20$ on a random support and for different number of measurements. The times include all the operations needed to produce the measurements, thus including the computation of the eigenvectors of the Laplacian for the methods requiring them, or the dominating set for the proposed method. Notice that for the proposed method, the 1-hop dominating set has size $\vert \D \vert = 163$, the 2-hop dominating set has size $\vert \D^2 \vert = 48$ and the 3-hop dominating set has size $\vert \D^2 \vert = 5$. The methods were tested in MATLAB on a PC with an Intel Xeon E5620 CPU and 24GB of RAM. It can be noticed that the proposed method is faster, except when the number for measurements is very large and the repetition strategy is used. As discussed in Sec. \ref{sec:samp_strat} this is almost entirely due to the check on linear independence of the new measurement. On the other hand, the runtime of the weighted random method is dominated by the computation of the eigenvectors of the Laplacian, and in the case of MinPinv by the partial SVDs. The method based on the localization operator scales well thanks to the use of polynomial filters and is close to the proposed method.

\begin{table}[]
    \centering
    \caption{Sampling runtime (seconds)}
    \begin{tabular}{c|ccccc}
    \multirow{2}{*}{$m$} & \multirow{2}{*}{\textbf{Weighted}} & \multirow{2}{*}{\textbf{MinPinv}} & \multirow{2}{*}{\textbf{Localization}} & \multirow{2}{*}{\textbf{Proposed}} & \multirow{2}{*}{\textbf{Proposed}}\\[3pt]
    & \textbf{random} & & \textbf{operator} & \textbf{(repeated)} & \textbf{(new)}\tabularnewline
    \hline
    20 & 5.88 & 8.25 & 2.06 & \textbf{1.29} & 1.71\tabularnewline
    100 & 5.89 & 31.56 & 3.98 & \textbf{1.25} & 1.87\tabularnewline
    500 & 5.94 & 203.17 & 14.40 & 16.49 & \textbf{4.99}\tabularnewline
    \end{tabular}
    \label{table:complexity}
\end{table}

\subsection{Application: data gathering in sensor networks}
Wireless sensor networks are made of multiple, typically inexpensive sensors, tasked with acquiring a signal in a certain area. Such sensors usually have very limited computational capabilities and batteries, so power management is critical. We will consider the same scenario of data gathering presented in \cite{wsn_datagather}, in which a signal is sampled by means of a few measurements that are transmitted only by small portion of nodes to a remote base station. \cite{wsn_datagather} also uses an approach based on CS but it is less flexible than the one presented in this paper as it results in a block-diagonal sensing matrix. The setup of the experiments considers $n$ sensors randomly positioned inside the unit square and a remote base station at a distance equal to 5 times the side of the square. Sensors incur a transmission cost due to the consumed power being equal to the square of the transmission distance: $P_{ij}=d_{ij}^2$. The total power consumption is determined by the communications among the sensors and with the base station: $P_\text{tot} = P_\text{intra} + P_\text{BS}$. The method in \cite{wsn_datagather} chooses a number of cluster heads, i.e., the sensors aggregating the measurements and transmitting them to the base station, and selects them at random among the sensors. Each sensor which is not a cluster head multiplies its signal coefficient by one or more random Gaussians and sends the value to the closest cluster head. The signal is supposed to be sparse in the DCT domain. For the proposed method, the graph is a nearest-neighbor graph with a radius equal to 0.2. In order to present a fair experiment, we compared the proposed method with \cite{wsn_datagather} in terms of consumed power for the same reconstruction MSE or, equivalently, in terms of reconstruction MSE for a fixed power consumption. This can be observed in Figure \ref{fig:wsn_power_mse} which shows that the proposed method achieves a better power-quality tradeoff, i.e., it is advantageous regardless of the number of clusters for the method in \cite{wsn_datagather}, as it allows to save power for the same quality of reconstruction or, equivalently, to achieve better quality for the same power.

\begin{figure}[t]
\centering
\includegraphics[width=0.55\columnwidth]{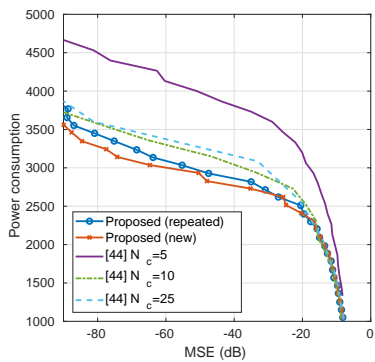}
\caption{Tradeoff between power consumption and reconstruction MSE for data gathering in a wireless sensor network. $N_c$ is the number of cluster heads in \cite{wsn_datagather}. $n=250$, $k=50$.}
\label{fig:wsn_power_mse}
\end{figure}

\section{Conclusions}
\label{sec:conclusions}

We proposed a sampling algorithm for signals defined over graphs that draws concepts from the theories of compressed sensing and graph signal processing. It allows representing a graph signal by a small number of measurements obtained through localized operations, i.e., only involving exchanges of information restricted to local neighborhoods. The method has theoretical guarantees in terms of reconstruction and stability to noise for any graph topology and any orthonormal basis where the signal has a sparse representation. The algorithm is agnostic to information on the signal support and therefore can be used when the signal support is both known or unknown. 

\bibliographystyle{IEEEtran}

\begin{thebibliography}{10}
\providecommand{\url}[1]{#1}
\csname url@samestyle\endcsname
\providecommand{\newblock}{\relax}
\providecommand{\bibinfo}[2]{#2}
\providecommand{\BIBentrySTDinterwordspacing}{\spaceskip=0pt\relax}
\providecommand{\BIBentryALTinterwordstretchfactor}{4}
\providecommand{\BIBentryALTinterwordspacing}{\spaceskip=\fontdimen2\font plus
\BIBentryALTinterwordstretchfactor\fontdimen3\font minus
  \fontdimen4\font\relax}
\providecommand{\BIBforeignlanguage}[2]{{%
\expandafter\ifx\csname l@#1\endcsname\relax
\typeout{** WARNING: IEEEtran.bst: No hyphenation pattern has been}%
\typeout{** loaded for the language `#1'. Using the pattern for}%
\typeout{** the default language instead.}%
\else
\language=\csname l@#1\endcsname
\fi
#2}}
\providecommand{\BIBdecl}{\relax}
\BIBdecl

\bibitem{shuman2013emerging}
D.~I. Shuman, S.~K. Narang, P.~Frossard, A.~Ortega, and P.~Vandergheynst, ``The
  emerging field of signal processing on graphs: Extending high-dimensional
  data analysis to networks and other irregular domains,'' \emph{IEEE Signal
  Processing Magazine}, vol.~30, no.~3, pp. 83--98, May 2013.

\bibitem{sensor_networks}
I.~Jablonski, ``Graph signal processing in applications to sensor networks,
  smart grids and smart cities,'' \emph{IEEE Sensors Journal}, vol.~PP, no.~99,
  pp. 1--1, Dec. 2017.

\bibitem{Lazer721}
D.~Lazer, A.~Pentland, L.~Adamic, S.~Aral, A.-L. Barab{\'a}si, D.~Brewer,
  N.~Christakis, N.~Contractor, J.~Fowler, M.~Gutmann, T.~Jebara, G.~King,
  M.~Macy, D.~Roy, and M.~Van~Alstyne, ``Computational social science,''
  \emph{Science}, vol. 323, no. 5915, pp. 721--723, Feb. 2009.

\bibitem{Davidson1669}
E.~H. Davidson, J.~P. Rast, P.~Oliveri, A.~Ransick, C.~Calestani, C.-H. Yuh,
  T.~Minokawa, G.~Amore, V.~Hinman, C.~Arenas-Mena, O.~Otim, C.~T. Brown, C.~B.
  Livi, P.~Y. Lee, R.~Revilla, A.~G. Rust, Z.~j. Pan, M.~J. Schilstra, P.~J.~C.
  Clarke, M.~I. Arnone, L.~Rowen, R.~A. Cameron, D.~R. McClay, L.~Hood, and
  H.~Bolouri, ``A genomic regulatory network for development,'' \emph{Science},
  vol. 295, no. 5560, pp. 1669--1678, Mar. 2002.

\bibitem{huang2016graph}
W.~Huang, L.~Goldsberry, N.~F. Wymbs, S.~T. Grafton, D.~S. Bassett, and
  A.~Ribeiro, ``Graph frequency analysis of brain signals,'' \emph{IEEE Journal
  of Selected Topics in Signal Processing}, vol.~10, no.~7, pp. 1189--1203,
  Oct. 2016.

\bibitem{thanou2016graph}
D.~Thanou, P.~A. Chou, and P.~Frossard, ``Graph-based compression of dynamic
  3{D} point cloud sequences,'' \emph{IEEE Transactions on Image Processing},
  vol.~25, no.~4, pp. 1765--1778, April 2016.

\bibitem{fracastoro2015predictive}
G.~Fracastoro and E.~Magli, ``Predictive graph construction for image
  compression,'' in \emph{2015 IEEE International Conference on Image
  Processing (ICIP)}, Sept. 2015, pp. 2204--2208.

\bibitem{fracastoro2016graph}
G.~Fracastoro, D.~Thanou, and P.~Frossard, ``Graph transform learning for image
  compression,'' in \emph{2016 Picture Coding Symposium (PCS)}, Dec. 2016, pp.
  1--5.

\bibitem{hu2015multiresolution}
W.~Hu, G.~Cheung, A.~Ortega, and O.~C. Au, ``Multiresolution graph fourier
  transform for compression of piecewise smooth images,'' \emph{IEEE
  Transactions on Image Processing}, vol.~24, no.~1, pp. 419--433, Jan. 2015.

\bibitem{Moura_bigdata}
A.~Sandryhaila and J.~M.~F. Moura, ``Big data analysis with signal processing
  on graphs: Representation and processing of massive data sets with irregular
  structure,'' \emph{IEEE Signal Processing Magazine}, vol.~31, no.~5, pp.
  80--90, Sept. 2014.

\bibitem{Donoho_CS}
D.~L. Donoho, ``Compressed sensing,'' \emph{IEEE Transactions on Information
  Theory}, vol.~52, no.~4, pp. 1289--1306, April 2006.

\bibitem{hammond2011wavelets}
D.~K. Hammond, P.~Vandergheynst, and R.~Gribonval, ``Wavelets on graphs via
  spectral graph theory,'' \emph{Applied and Computational Harmonic Analysis},
  vol.~30, no.~2, pp. 129--150, Mar. 2011.

\bibitem{Leuss_Dual}
G.~Leus, S.~Segarra, A.~Ribeiro, and A.~G. Marques, ``The dual graph shift
  operator: Identifying the support of the frequency domain,'' \emph{arXiv
  preprint arXiv:1705.08987}, 2017.

\bibitem{Tsitsvero_uncertainty}
M.~Tsitsvero, S.~Barbarossa, and P.~D. Lorenzo, ``Signals on graphs:
  Uncertainty principle and sampling,'' \emph{IEEE Transactions on Signal
  Processing}, vol.~64, no.~18, pp. 4845--4860, Sept. 2016.

\bibitem{Puy_random}
G.~Puy, N.~Tremblay, R.~Gribonval, and P.~Vandergheynst, ``Random sampling of
  bandlimited signals on graphs,'' \emph{Applied and Computational Harmonic
  Analysis}, Mar. 2016.

\bibitem{Chen_sampling}
S.~Chen, R.~Varma, A.~Sandryhaila, and J.~Kovacevic, ``Discrete signal
  processing on graphs: Sampling theory,'' \emph{IEEE Transactions on Signal
  Processing}, vol.~63, no.~24, pp. 6510--6523, Dec 2015.

\bibitem{anis2016efficient}
A.~Anis, A.~Gadde, and A.~Ortega, ``Efficient sampling set selection for
  bandlimited graph signals using graph spectral proxies,'' \emph{IEEE
  Transactions on Signal Processing}, vol.~64, no.~14, pp. 3775--3789, July
  2016.

\bibitem{romero2017kernel}
D.~Romero, M.~Ma, and G.~B. Giannakis, ``Kernel-based reconstruction of graph
  signals,'' \emph{IEEE Transactions on Signal Processing}, vol.~65, no.~3, pp.
  764--778, Feb. 2017.

\bibitem{chamon2017greedy}
L.~F. Chamon and A.~Ribeiro, ``Greedy sampling of graph signals,'' \emph{arXiv
  preprint arXiv:1704.01223}, 2017.

\bibitem{Marques_successive}
A.~G. Marques, S.~Segarra, G.~Leus, and A.~Ribeiro, ``Sampling of graph signals
  with successive local aggregations,'' \emph{IEEE Transactions on Signal
  Processing}, vol.~64, no.~7, pp. 1832--1843, April 2016.

\bibitem{wang2016local}
X.~Wang, J.~Chen, and Y.~Gu, ``Local measurement and reconstruction for noisy
  bandlimited graph signals,'' \emph{Signal Processing}, vol. 129, pp.
  119--129, Dec. 2016.

\bibitem{Candes_nearoptimal}
E.~J. Candes and T.~Tao, ``Near-optimal signal recovery from random
  projections: Universal encoding strategies?'' \emph{IEEE Transactions on
  Information Theory}, vol.~52, no.~12, pp. 5406--5425, Dec. 2006.

\bibitem{Varma_spectrum}
R.~Varma, S.~Chen, and J.~Kovacevic, ``Spectrum-blind signal recovery on
  graphs,'' in \emph{2015 IEEE 6th International Workshop on Computational
  Advances in Multi-Sensor Adaptive Processing (CAMSAP)}, Dec. 2015, pp.
  81--84.

\bibitem{sandryhaila2014discrete}
A.~Sandryhaila and J.~M. Moura, ``Discrete signal processing on graphs:
  Frequency analysis.'' \emph{IEEE Transanction on Signal Processing}, vol.~62,
  no.~12, pp. 3042--3054, June 2014.

\bibitem{Candes_decoding}
E.~J. Candes and T.~Tao, ``Decoding by linear programming,'' \emph{IEEE
  Transactions on Information Theory}, vol.~51, no.~12, pp. 4203--4215, Dec.
  2005.

\bibitem{Baraniuk2008}
R.~Baraniuk, M.~Davenport, R.~DeVore, and M.~Wakin, ``A simple proof of the
  restricted isometry property for random matrices,'' \emph{Constructive
  Approximation}, vol.~28, no.~3, pp. 253--263, Dec. 2008.

\bibitem{Mendelson2008}
S.~Mendelson, A.~Pajor, and N.~Tomczak-Jaegermann, ``Uniform uncertainty
  principle for bernoulli and subgaussian ensembles,'' \emph{Constructive
  Approximation}, vol.~28, no.~3, pp. 277--289, Dec. 2008.

\bibitem{Haupt_Toeplitz_IT}
J.~Haupt, W.~Bajwa, G.~Raz, and R.~Nowak, ``Toeplitz compressed sensing
  matrices with applications to sparse channel estimation,'' \emph{IEEE
  Transactions on Information Theory}, vol.~56, no.~11, pp. 5862--5875, Nov.
  2010.

\bibitem{JLCirculant_improved}
J.~Vyb\'iral, ``{A variant of the Johnson-Lindenstrauss lemma for circulant
  matrices},'' \emph{{Journal of Functional Analysis}}, vol.~26, no.~4, pp.
  1096--1105, Feb. 2011.

\bibitem{Yap_block}
H.~L. Yap, A.~Eftekhari, M.~B. Wakin, and C.~J. Rozell, ``The restricted
  isometry property for block diagonal matrices,'' in \emph{2011 45th Annual
  Conference on Information Sciences and Systems}, March 2011, pp. 1--6.

\bibitem{eftekhari2015restricted}
A.~Eftekhari, H.~L. Yap, C.~J. Rozell, and M.~B. Wakin, ``The restricted
  isometry property for random block diagonal matrices,'' \emph{Applied and
  Computational Harmonic Analysis}, vol.~38, no.~1, pp. 1--31, Jan. 2015.

\bibitem{Duarte_structured}
M.~F. Duarte and Y.~C. Eldar, ``Structured compressed sensing: From theory to
  applications,'' \emph{IEEE Transactions on Signal Processing}, vol.~59,
  no.~9, pp. 4053--4085, Sept. 2011.

\bibitem{Valsesia_compressive}
D.~Valsesia and E.~Magli, ``Compressive signal processing with circulant
  sensing matrices,'' in \emph{2014 IEEE International Conference on Acoustics,
  Speech and Signal Processing (ICASSP)}, May 2014, pp. 1015--1019.

\bibitem{sakiyama_2016}
A.~Sakiyama, Y.~Tanaka, T.~Tanaka, and A.~Ortega, ``Efficient sensor position
  selection using graph signal sampling theory,'' in \emph{2016 IEEE
  International Conference on Acoustics, Speech and Signal Processing
  (ICASSP)}, March 2016, pp. 6225--6229.

\bibitem{sakiyama_2017}
------, ``Accelerated sensor position selection using graph localization
  operator,'' in \emph{2017 IEEE International Conference on Acoustics, Speech
  and Signal Processing (ICASSP)}, March 2017, pp. 5890--5894.

\bibitem{Pan_vandermonde}
V.~Y. Pan, ``How bad are {V}andermonde matrices?'' \emph{SIAM Journal on Matrix
  Analysis and Applications}, vol.~37, no.~2, pp. 676--694, 2016.

\bibitem{Garey_computers}
M.~R. Garey and D.~S. Johnson, \emph{{Computers and Intractability; A Guide to
  the Theory of NP-Completeness}}.\hskip 1em plus 0.5em minus 0.4em\relax New
  York, NY, USA: W. H. Freeman \& Co., 1990.

\bibitem{Jia2002}
L.~Jia, R.~Rajaraman, and T.~Suel, ``An efficient distributed algorithm for
  constructing small dominating sets,'' \emph{Distributed Computing}, vol.~15,
  no.~4, pp. 193--205, Dec. 2002.

\bibitem{Krause2008}
A.~Krause, A.~Singh, and C.~Guestrin, ``Near-optimal sensor placements in
  gaussian processes: Theory, efficient algorithms and empirical studies,''
  \emph{J. Mach. Learn. Res.}, vol.~9, pp. 235--284, Jun. 2008.

\bibitem{krahmer2014suprema}
F.~Krahmer, S.~Mendelson, and H.~Rauhut, ``Suprema of chaos processes and the
  restricted isometry property,'' \emph{Communications on Pure and Applied
  Mathematics}, vol.~67, no.~11, pp. 1877--1904, Nov. 2014.

\bibitem{Gut_probability}
A.~Gut, \emph{{An Intermediate Course in Probability}}.\hskip 1em plus 0.5em
  minus 0.4em\relax Springer, 2009.

\bibitem{hogg2014probability}
R.~V. Hogg, E.~Tanis, and D.~Zimmerman, \emph{Probability and statistical
  inference}.\hskip 1em plus 0.5em minus 0.4em\relax Pearson Higher Ed, 2014.

\bibitem{perraudin2014gspbox}
N.~Perraudin, J.~Paratte, D.~Shuman, L.~Martin, V.~Kalofolias,
  P.~Vandergheynst, and D.~K. Hammond, ``{GSPBOX}: A toolbox for signal
  processing on graphs,'' \emph{arXiv preprint arXiv:1408.5781}, 2014.

\bibitem{wsn_datagather}
M.~T. Nguyen and K.~A. Teague, ``Compressive sensing based data gathering in
  clustered wireless sensor networks,'' in \emph{2014 IEEE International
  Conference on Distributed Computing in Sensor Systems}, May 2014, pp.
  187--192.

\end{thebibliography}

\end{document}